\newif \iflipics
\lipicstrue

\iflipics

\documentclass[a4paper,UKenglish,cleveref,autoref,cequ]{lipics-v2021-modified}


\pdfoutput=1 
\hideLIPIcs  


\bibliographystyle{plainurl}

\title{Distal combinatorial tools for graphs of bounded twin-width}


\author{Wojciech Przybyszewski}{Institute of Informatics, University of Warsaw, Poland \and \url{https://www.mimuw.edu.pl/~przybyszewski/}}{przybyszewski@mimuw.edu.pl}{https://orcid.org/
0000-0003-1158-9925}{}

\authorrunning{W. Przybyszewski} 

\Copyright{Wojciech Przybyszewski} 

\ccsdesc[100]{Mathematics of Computing $\to$ Discrete Mathematics $\to$ Graph Theory; Theory of computation $\to$ Logic $\to$ Finite Model Theory} 
\keywords{
    graph theory, twin-width, distality, neighborhood complexity,  Vapnik-Chervonenkis density, abstract cell decomposition, regularity lemma, cutting lemma

}




\funding{\ERCagreement}

\acknowledgements{
    I am deeply grateful to Prof. Szymon Toruńczyk, my master's thesis supervisor, for introducing me to this area of research as well as for his assistance and dedicated involvement in every step throughout the process of preparing this paper.
I am also grateful to the Anonymous Reviewers for their insightful comments, which helped me in rewriting this paper to its final form.

}

\nolinenumbers 

\EventEditors{John Q. Open and Joan R. Access}
\EventNoEds{2}
\EventLongTitle{42nd Conference on Very Important Topics (CVIT 2016)}
\EventShortTitle{CVIT 2016}
\EventAcronym{CVIT}
\EventYear{2016}
\EventDate{December 24--27, 2016}
\EventLocation{Little Whinging, United Kingdom}
\EventLogo{}
\SeriesVolume{42}
\ArticleNo{23}

\usepackage[T1]{fontenc}
\usepackage{amsmath}
\usepackage{amsthm}
\usepackage{mathtools}
\usepackage{stackengine}
\usepackage{bbm}
\usepackage{thm-restate}
\usepackage{hyperref}
\usepackage{breqn}

\newcommand{\ERCagreement}{{\begin{minipage}{.56\textwidth}This work is a part of project {\sc{BOBR}} that has received funding from the European Research Council (ERC) under the European Union's Horizon 2020 research and innovation programme (grant agreement No 948057). \end{minipage}\hfill\begin{minipage}{.33\textwidth}\includegraphics[width=\textwidth]{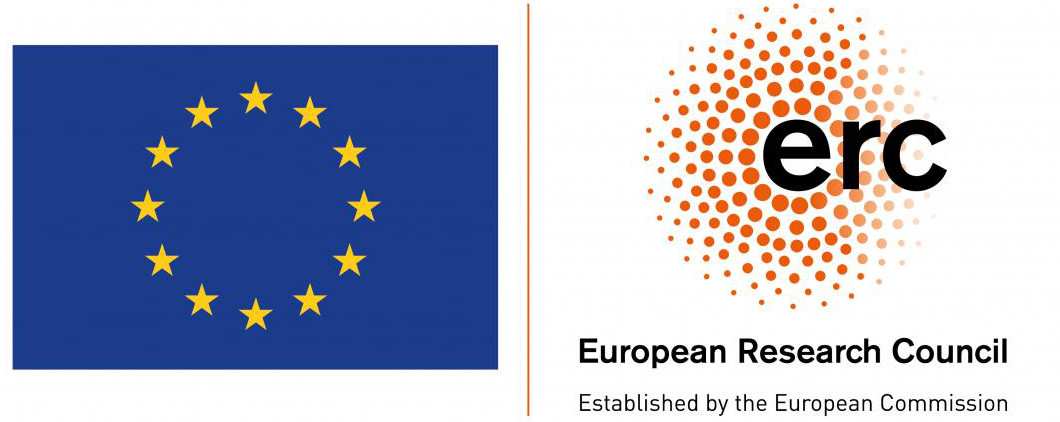}\end{minipage}\hfill}}

\newcommand{\A}{\mathbb A}

\newcommand{\R}{\mathbb R}

\newcommand{\N}{\mathbb N}
\newcommand{\F}{\mathcal F}
\newcommand{\C}{\mathcal C}
\newcommand{\Pp}{\mathcal P}

\newcommand{\T}{\mathcal T}
\newcommand{\I}{\mathcal I}
\newcommand{\M}{\mathbb M}
\newcommand{\Row}{\mathcal R}
\newcommand{\Col}{\mathcal C}

\newcommand{\nz}[1]{\widehat{N_{#1}}}
\newcommand{\set}[1]{\{#1\}}
\newcommand{\seq}[2]{#1_{1}, \ldots, #1_{#2}}
\newcommand{\seqi}[1]{1, \ldots, #1}

\renewcommand{\hat}{\widehat}
\renewcommand{\phi}{\varphi}
\renewcommand{\epsilon}{\varepsilon}
\renewcommand{\bar}{\overline}

\DeclareMathOperator{\tww}{tww}
\DeclareMathOperator{\vc}{vc}
\DeclareMathOperator{\prof}{Prof}

\DeclareMathOperator{\ass}{assert}

\newtheorem*{theorem*}{Theorem}

\begin{document}
\maketitle

\begin{abstract}
We study set systems formed by neighborhoods in graphs of bounded twin-width.
We start by proving that such graphs have linear neighborhood complexity, in analogy to previous results concerning graphs from classes with bounded expansion and of bounded clique-width.
Next, we shift our attention to the notions of distality and abstract cell decomposition, which come from model theory.
We give a direct combinatorial proof that the edge relation is distal in classes of ordered graphs of bounded twin-width.
This allows us to apply Distal cutting lemma and Distal regularity lemma, so we obtain powerful combinatorial tools for graphs of bounded twin-width.
\end{abstract}

\section{Introduction}
\label{chapter:introduction}
Twin-width is a graph width parameter recently introduced by Bonnet et al. in \cite{twinwidth1} with remarkable properties.
In particular, classes of bounded twin-width generalize some of the previously examined graph classes (e.g. planar graphs, graphs excluding a fixed minor, graphs of bounded clique-width), while admitting good structural properties (e.g. being closed under FO-transductions), algorithmical properties (e.g. FO model checking can be done in linear time on graphs of bounded twin-width given a contraction sequence) and combinatorial properties (e.g. $\chi$-boundedness \cite{twinwidth3}).

\medskip

\paragraph*{Graphs of bounded twin-width have linear neighborhood complexity}
Our focus is on the combinatorial and logical complexity of set systems defined by neighborhoods in graphs of bounded twin-width.
This continues a line of research studying other graph classes in this context.
Namely, graphs of bounded clique-width have linear neighborhood complexity, as it was proven in \cite{cliquewidth}.
That means that for every graph $G$ of clique-width at most $c$ and any non-empty subset $A$ of its vertices there are at most $n_c|A|$ different neighborhoods of vertices of $G$ in $A$ (where the constant $n_c$ depends only on $c$).
The same is also true for planar graphs and graphs excluding a fixed minor.
More generally, the same holds for classes with bounded expansion \cite{benc}, although they are not generalized by classes of bounded twin-width.
Our first result states that graphs of bounded twin-width also have linear neighborhood complexity.

\begin{theorem*}[Informal version of Theorem \ref{theorem:neighborhood-complexity-theorem}]
For every integer~$t$, there is a constant $n_t$ such that for every graph $G$ of twin-width at most $t$ and every non-empty subset $A \subseteq V(G)$, the vertices of $G$ have at most $n_t|A|$ different neighborhoods in~$A$.
\end{theorem*}

Our proof relies on properties of matrices of graphs of bounded twin-width shown in \cite{twinwidth1}, as well as the celebrated Stanley-Wilf conjecture/Marcus-Tardos theorem \cite{marcustardos}.
We remark that Theorem \ref{theorem:neighborhood-complexity-theorem} was proven independently by Bonnet et al. in \cite{twinwidthkernels}.
Although the techniques used there are similar to ours, we decided to include a self-contained proof of Theorem \ref{theorem:neighborhood-complexity-theorem} since the underlying ideas are needed in the later part of the paper.

Theorem \ref{theorem:neighborhood-complexity-theorem} in particular implies that the VC-density of graphs of bounded twin-width is equal to $1$ (Corollary \ref{cor:vc-density}).
This notion is a refinement of VC-dimension, which is the most well-known measure of complexity of set systems and was introduced in \cite{vcdim}.
Both measures are applied in a wide range of fields, including statistical learning theory and computational geometry.
Whereas a set system has bounded VC-dimension if and only if it has bounded VC-density, in general, VC-density is at most as large as the VC-dimension, and it turns out that the VC-density, not the VC-dimension, is the decisive measure for the combinatorial complexity of a family of sets \cite{VCdensity}.
For example, the VC-density of $\mathcal S$ governs the size of packings in $\mathcal S$ with respect to the Hamming metric \cite{VCpackings} and is intimately related to the notions of entropic dimension \cite{VCentropy} and discrepancy \cite{VCdiscrepancy}.
VC-density has also a number of applications in algorithmics, for example \textsc{Connected $k$-Vertex Cover} admits a kernel with $O(k^{1.5})$ vertices on classes of graphs with VC-density at most $1$ \cite{twinwidthkernels}.

\medskip
\paragraph*{The edge relation is distal in classes of graphs of bounded twin-width}
In the following part of the paper we refine our analysis of set systems defined by neighborhoods in graphs of bounded twin-width, and turn our attention to their logical complexity.
We do it by investigating the notion of distality defined by Simon in \cite{distality-definition}.
This notion comes from model theory and aims at describing theories which are NIP and \emph{purely instable}.
The original definition is phrased in the language of model theory and was stated for theories and single infinite models.
However, there is also an equivalent definition given in \cite{distal}.
It uses abstract cell decompositions, which are more combinatorial objects.
We adapt this definition to the context of classes of finite structures, e.g. graphs., by saying that a formula is distal in a class of structures if it admits an abstract cell decomposition.

It is known that every graph of bounded twin-width can be equipped with a linear order on its vertices such that the twin-width of the ordered graph (seen as a binary structure) is still bounded.
However, computing an optimal order is a hard problem.
In fact, even the problem of deciding whether a given (unordered) graph has twin-width at most $4$ is NP-complete \cite{twin-width-4-np-complete}.
On the other hand, when it comes to the problem of FO model checking on ordered graphs, it can be solved in fpt time precisely on these classes of ordered graphs that have bounded twin-width \cite{twinwidth4}. 
It is also known that classes of ordered graphs of bounded twin-width are distal by combining a few results from model theory.
However, this proof is highly unconstructive and relies on the compactness theorem for first-order logic.
In the main result of the paper (Theorem \ref{theorem:encoding_theorem}) we give a direct combinatorial proof that the edge relation is distal in classes of graphs of bounded twin-width, provided that we add to every graph an order that is a witness of a low twin-width.

\begin{theorem*}[Informal version of Theorem \ref{theorem:encoding_theorem}]
    Let $\widehat{\C}$ be a class of ordered graphs of twin-width bounded by some constant~$t$.
    Then, the formula $\phi(x; y) \equiv E(x, y)$ admits in $\widehat{C}$ a distal cell decomposition weakly definable by a finite set $\Psi(x; y_1, \ldots, y_k)$, where $k = \mathcal O(t)$.
    Moreover, the exact form of formulas in $\Psi$ follows from the proof.
\end{theorem*}

This allows us to e.g. obtain effective bounds on the constants in theorems about powerful distal combinatorial tools proven in \cite{regularity, distal} and to better understand structure of sets that appear in these theorems.

\medskip

\paragraph*{Distal combinatorial tools}

After adapting the notion of abstract cell decompositions to classes of finite structures we translate theorems about powerful combinatorial tools proven in \cite{regularity, distal} into this setting.
We now briefly describe these tools and motivations for investigating abstract cell decompositions.

\paragraph*{Cutting lemma}

The so-called cutting lemma is a very useful combinatorial partition tool with numerous applications in computational and incidence geometry and related areas (see e.g. \cite[Sections 4.5, 6.5]{matousek_lectures} or \cite{cuttings} for a survey).
In its simplest form it can be stated as follows (see e.g.
\cite[Lemma 4.5.3]{matousek_lectures}).
\begin{theorem*}[Cutting lemma]
    For every set $L$ of $n$ lines in the real plane and every $1 < r < n$ there exists a $\frac 1r$-cutting for $L$ of size $O(r^2)$.
    That is, there is a subdivision of the plane into generalized triangles (i.e. intersections of three half-planes) $\seq \Delta t$ so that the interior of each $\Delta_i$ is intersected by at most $\frac nr$ lines in $L$, and we have $t \le Cr^2$ for a certain constant $C$ independent of $n$ and $r$.
\end{theorem*}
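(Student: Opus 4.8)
The plan is to prove this through the classical random-sampling (probabilistic) method. Fix a parameter $\rho = \Theta(r)$, with the hidden constant tuned at the end, and draw a subset $R \subseteq L$ of size $\rho$ uniformly at random. Consider the arrangement $\mathcal A(R)$ that $R$ induces on the plane: it subdivides $\R^2$ into $O(\rho^2)$ convex, possibly unbounded, faces. Inside each face apply the \emph{bottom-vertex triangulation} (join the lowest vertex of the face to all the others, breaking ties by a fixed generic direction); this decomposes every face into generalized triangles, each an intersection of at most three half-planes, and each one determined by a bounded number $c_0 \le 6$ of lines of $R$. The total number of generalized triangles produced is still $O(\rho^2) = O(r^2)$, so the candidate cutting has the right size; what remains is to control how many lines of $L$ cross the interior of each triangle.

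First I would establish a weaker bound by an $\epsilon$-net argument. A generalized triangle $\Delta$ appearing in the triangulation of $\mathcal A(R)$ and having $k$ lines of $L$ crossing its interior can occur only if all $c_0$ of its defining lines lie in $R$ while none of those $k$ crossing lines does; note that, since the triangulation refines $\mathcal A(R)$, no line of $R$ ever crosses $\Delta$'s interior. Summing this probability over the $O(n^{c_0})$ candidate triangles --- equivalently, invoking the $\epsilon$-net theorem for the range space whose ranges are the sets of lines crossing interiors of generalized triangles, a range space of bounded VC dimension --- one obtains that with positive probability every triangle of the triangulation is crossed by at most $O\!\left(\frac n\rho \log \rho\right)$ lines. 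Taking $\rho = \Theta(r\log r)$ already yields a $\frac1r$-cutting, but only of size $O(r^2\log^2 r)$.

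To strip the logarithmic factors and reach the optimal $O(r^2)$, I would instead use the Clarkson--Shor / Chazelle--Friedman \emph{exponential decay} estimate: for a random sample of size $\rho$, the expected number of generalized triangles in the triangulation whose interior is crossed by between $j\cdot\frac n\rho$ and $2j\cdot\frac n\rho$ lines is $O(\rho^2\,2^{-j})$. With this in hand, set $\rho = \Theta(r)$, call a triangle \emph{heavy} if it is crossed by more than $\frac nr$ lines, and recurse inside each such $\Delta$: keeping only the $\Theta\!\left(\frac{jn}{r}\right)$ lines that meet $\Delta$, build for them a $\tfrac1{\Theta(j)}$-cutting --- itself of size $O(j^2)$ --- that brings the crossing number below $\frac nr$, clip it to $\Delta$, and re-triangulate into generalized triangles if necessary. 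Summing over the dyadic scales $j = 1, 2, 4, \ldots$ the product of the $O(\rho^2 2^{-j})$ heavy triangles by the $O(j^2)$ cells each of them spawns, the exponential decay beats the polynomial factor, the sum converges, and the total number of final generalized triangles remains $O(r^2)$ --- each crossed, by construction, by at most $\frac nr$ lines of $L$.

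The main obstacle is precisely this last step. The union-bound / $\epsilon$-net version is short but inherently lossy by logarithmic factors; removing them requires the exponential-decay estimate, whose proof itself rests on the Clarkson--Shor technique for counting the faces of an arrangement that lie ``at level $\le k$'', and one must then run the recursion carefully enough that both the generalized-triangle form of the cells and the $O(r^2)$ size budget survive all the way down to the bottom level.
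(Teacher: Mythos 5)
The paper does not prove this theorem: it is quoted as a classical background result motivating the Distal cutting lemma, with an explicit citation to Matou\v{s}ek's \emph{Lectures on Discrete Geometry} (Lemma~4.5.3 and Section~6.5), so there is no proof in the paper itself to compare against. Your sketch is a faithful outline of the standard Chazelle--Friedman argument used in that reference --- random sample of $\Theta(r)$ lines, bottom-vertex triangulation of $\mathcal A(R)$, an $\epsilon$-net argument giving the weaker $O(r^2\log^2 r)$ cutting, and the Clarkson--Shor exponential-decay estimate combined with local refinement to strip the logarithms --- and the calculation $\sum_j O(\rho^2 2^{-j})\cdot O(j^2)=O(r^2)$ is the right one. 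The only point worth making explicit to close the argument is that the refinement step as you phrase it invokes the cutting lemma recursively inside each heavy triangle; to avoid circularity one either sets this up as an induction on the number of lines, or (as Matou\v{s}ek does) uses the non-recursive $\epsilon$-net cutting at the second level, whose extra $\mathrm{polylog}(j)$ factor is still absorbed by the $2^{-j}$ decay. With that bookkeeping spelled out, the proposal is correct and matches the cited source's approach.
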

This result provides a method to analyze intersection patterns in families of lines, and it has many generalizations to higher dimensional sets and/or to families of sets of more complicated shape than lines, for example families of algebraic or semialgebraic curves of bounded complexity \cite{cuttings_example}.
Combining the result of Chernikov, Galvin and Starchenko from \cite{distal} with Theorem \ref{theorem:encoding_theorem} we get a version of distal cutting lemma for graphs of bounded twin-width.
\begin{theorem*}[Informal version of Theorem \ref{theorem:cutting-lemma-tww}]
    Let $\C$ be a class of graphs of twin-width at most $t$.
    Then, there is an integer $d = \mathcal O(t)$ with the following property:
    for any graph $G \in \C$, any $A \subseteq V(G)$ of size $n$, and any real $1 \le r \le n$ we can partition the vertices of $V(G)$ into at most $Cr^{d}$ sets $\seq Xl$ such that the vertices in every $X_i$ have almost the same neighborhood in $A$.
    More precisely, there are at most $\frac nr$ vertices $a \in A$ for which there are $u, v \in X_i$ with $(u, a) \in E(G)$ and $(v, a) \not \in E(G)$.
\end{theorem*}
Let us remark that this theorem is obvious for $r = 1$ (we just take $X_1 = V(G)$) and the version for $r = n$ corresponds to having a polynomial neighborhood complexity and is implied by Theorem \ref{theorem:neighborhood-complexity-theorem}.

\paragraph*{Regularity Lemma}

Another property of distal classes of graphs is a stronger version of the regularity lemma.
The original Szemerédi Regularity Lemma is a fundamental result in graph combinatorics with many versions and applications in extremal combinatorics, number theory and computer science (see \cite{regularity-lemma} for a survey).
Roughly speaking, it says that for every $\epsilon > 0$ there is a bound $K = K(\epsilon)$ such that every graph $G$ can be partitioned into $K$ parts,
such that the edges between any two parts (apart from a few exceptions)
behave in a regular way (where $\epsilon$ controls the degree of regularity).
In its simplest form it can be presented as follows.

\begin{theorem*}[Regularity lemma]
    For every $\epsilon > 0$ there exists $K = K (\epsilon)$ such that: for every graph $G = (V, E)$ there exists a partition $V = V_1 \cup \ldots \cup V_k$ into non-empty sets and a set $\Sigma \subseteq [k] \times [k]$ with the following properties.
    \begin{enumerate}
        \item Bounded size of the partition: $k \le K$.
        \item Few exceptions: $|\bigcup_{(i, j) \in \Sigma} V_i \times V_j| \ge (1-\epsilon)|V|^2$.
        \item $\epsilon$-regularity: for all $(i, j) \in \Sigma$ and all $A \subseteq V_i, B \subseteq V_j$ with $|A| \ge \epsilon |V_i|, |B| \ge \epsilon |V_j|$, one has
        \[
            \big| d(A, B) - d(V_i, V_j) \big| \le \epsilon,
        \]
        where $d(X, Y) = \frac{|E(X, Y)|}{|X||Y|}$ and $E(X, Y)$ is the set of edges with one endpoint in $X$ and the other in $Y$.
    \end{enumerate}
\end{theorem*}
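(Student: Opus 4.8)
The plan is to prove the regularity lemma by the classical \emph{energy increment} (mean-square density) argument. To a partition $\mathcal{P} = \{V_1, \ldots, V_k\}$ of $V$ I associate its \emph{index}
\[
    q(\mathcal{P}) = \sum_{i,j \in [k]} \frac{|V_i|\,|V_j|}{|V|^2}\, d(V_i, V_j)^2,
\]
which satisfies $0 \le q(\mathcal{P}) \le 1$, being a weighted average of squares of densities. The first step records two convexity facts. First, refining a partition never decreases the index: if $\mathcal{P}'$ refines $\mathcal{P}$, then $q(\mathcal{P}') \ge q(\mathcal{P})$, because each density $d(V_i, V_j)$ is the size-weighted average of the finer densities inside $V_i \times V_j$, so Jensen's inequality applies cluster-pair by cluster-pair. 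Second, and crucially, irregularity forces a \emph{quantitative} gain: if a pair $(V_i, V_j)$ fails to be $\epsilon$-regular, witnessed by $W \subseteq V_i$, $U \subseteq V_j$ with $|W| \ge \epsilon|V_i|$, $|U| \ge \epsilon|V_j|$ and $|d(W,U) - d(V_i,V_j)| > \epsilon$, then replacing $V_i$ by $\{W, V_i \setminus W\}$ and $V_j$ by $\{U, V_j \setminus U\}$ raises the contribution of this pair to the index by more than $\epsilon^4 \cdot \frac{|V_i|\,|V_j|}{|V|^2}$ — this is the defect form of Cauchy--Schwarz, keeping the single variance term coming from the deviating block.

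Next I run the iteration. Start with the trivial partition $\mathcal{P}_0 = \{V\}$. Given $\mathcal{P}_m$ with $k_m$ parts, let $\Sigma_m \subseteq [k_m] \times [k_m]$ be the set of $\epsilon$-regular pairs. If $\sum_{(i,j) \notin \Sigma_m} \frac{|V_i|\,|V_j|}{|V|^2} \le \epsilon$, then, after discarding any empty parts, $\mathcal{P}_m$ together with $\Sigma_m$ satisfies conditions~1--3 of the statement, and we stop. Otherwise, for each irregular pair I fix witnessing subsets, and I let $\mathcal{P}_{m+1}$ be the common refinement in which each $V_i$ is cut by all witnesses lying inside it; thus each original part splits into at most $2^{k_m}$ pieces and $k_{m+1} \le k_m 2^{k_m} \le 4^{k_m}$. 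Since the index decomposes as a sum over cluster-pairs and, by the first convexity fact, the refined contribution of every cluster-pair is at least its old contribution — and strictly larger, by the second fact, for each irregular pair — we get $q(\mathcal{P}_{m+1}) \ge q(\mathcal{P}_m) + \epsilon^4 \cdot \epsilon = q(\mathcal{P}_m) + \epsilon^5$. As $q \le 1$, the process halts after at most $\epsilon^{-5}$ steps, and the number of parts at termination is bounded by an iterated exponential of height $O(\epsilon^{-5})$; this bound is the promised $K(\epsilon)$.

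The main obstacle is the second convexity fact, the quantitative energy-increment estimate, since everything else is either elementary bookkeeping or a one-line use of $q \le 1$. Concretely one must check that for the $2 \times 2$ subdivision by a witnessing pair the gain is at least (relative size of the witnessing blocks) $\times$ (density deviation)$^2$, which is where the exponent $4$ — hence the final $\epsilon^{-5}$ and the tower-type bound — originates; the simultaneous refinement across different irregular pairs is handled by arguing pair by pair and summing, using monotonicity of the index under further refinement to absorb the overlaps. A minor point is the degenerate regime (e.g. $|V|$ small, or parts of size $1$), where the relevant pairs are vacuously $\epsilon$-regular, so no special treatment is needed.
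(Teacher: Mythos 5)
Your write-up is a correct sketch of the classical energy-increment (mean-square density) proof of Szemerédi's regularity lemma: the index $q(\mathcal{P})$ is monotone under refinement by Jensen, irregular pairs force a defect Cauchy--Schwarz gain of order $\epsilon^4$ on their contribution, and if more than an $\epsilon$-fraction of the pair mass is irregular the index jumps by $\epsilon^5$, terminating the iteration in $\epsilon^{-5}$ rounds with a tower-type bound on $K$. The accounting of the refinement size ($k_{m+1} \le k_m 2^{k_m}$), the stopping rule matching condition 2 exactly, and the handling of degenerate small parts are all in order.

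However, you should be aware that the paper does not prove this statement at all: it is quoted verbatim in the introduction, with a citation to a survey, purely as motivation and contrast. The paper's actual regularity result is Theorem \ref{theorem:regularity_lemma}, a much stronger $0$--$1$ version for graphs of bounded twin-width with only polynomially many parts in $1/\epsilon$, and its proof goes through an entirely different mechanism: the definable abstract cell decomposition for the edge relation (Theorem \ref{theorem:encoding_theorem} via Theorem \ref{lemma:abstract_cell_decomposition_for_m_t}), combined with the Distal regularity lemma of \cite{regularity} applied to the infinite structure $\M_t$. So while your argument is sound as a proof of the classical lemma, it is orthogonal to what the paper does with the notion; the paper never runs an energy-increment iteration and its bound on the number of parts avoids the tower entirely by exploiting twin-width.
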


In general the bound on the size of the partition $K$ is known to grow as an exponential tower of height $\frac 1\epsilon$.
Recently several improved regularity lemmas were obtained in the
context of definable sets in certain structures or in restricted families of structures (see e.g. \cite{regularity-application-1, regularity-application-2}).

Chernikov and Starchenko proved much stronger version of the regularity lemma for distal structures \cite{regularity}.
For classes of graphs which are NIP (this is a property of classes of structures less restrictive than distality) and their edge relation is distal (in particular graphs of bounded twin-width satisfy this conditions), this can be translated as:

\begin{theorem*}[Informal version of Theorem \ref{theorem:simple_regularity_lemma}]
    Let $\C$ be an NIP class of graphs with a distal edge relation.
    Then, there is a constant $c$ depending only on $\C$ with the following property: for every $\epsilon > 0$ and for every graph $G \in \C$, there exists a partition $V(G) = V_1 \cup \ldots \cup V_k$ into non-empty sets, and a set $\Sigma \subseteq [k] \times [k]$ with the following properties.
    \begin{enumerate}
        \item Polynomially bounded size of the partition: $k \le c(\frac 1\epsilon)^{c}$.
        \item Few exceptions: $|\bigcup_{(i, j) \in \Sigma} V_i \times V_j| \ge (1-\epsilon)|V|^2$.
        \item $0-1$-regularity: for all $(i, j) \in \Sigma$ there are either all edges between $V_i$ and $V_j$ or no edge at all.
    \end{enumerate}
\end{theorem*}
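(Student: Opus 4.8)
The plan is to obtain this as a consequence of the abstract cell decomposition of the edge relation (Theorem~\ref{theorem:encoding_theorem}), via the general machinery behind the distal cutting and regularity lemmas of \cite{distal, regularity}. At the highest level, a definable abstract cell decomposition of the formula $\phi(x,y) := E(x,y)$, uniform over the class of ordered graphs of twin-width at most $t$ (with decoding formulas drawn from the finite set $\Delta_t$ and an exponent $d = d(t)$), is exactly the input required by the distal regularity lemma of Chernikov--Starchenko. Feeding it in with $\phi = E$ yields a partition $V = V_1 \cup \dots \cup V_k$ with $k = O((1/\epsilon)^{c})$ and a set $\Sigma$ of pairs covering at least $(1-\epsilon)|V|^2$ pairs of vertices on which $\phi$ is constant; since $E$ is $\{0,1\}$-valued, ``$\phi$ constant on $V_i \times V_j$'' is exactly the $0$/$1$-regularity condition~(3), and $c = c(t)$ is the exponent that lemma produces from $d(t)$. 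So, granting Theorem~\ref{theorem:encoding_theorem} --- whose proof, through the matrix structure theory of twin-width and the Marcus--Tardos theorem, is the technical core of the paper --- the statement is essentially a corollary, modulo re-deriving the general lemma.

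To keep the argument self-contained I would instead run the standard two-step scheme directly on the $\frac1r$-cutting for the edge relation supplied by the cutting lemma stated above. Apply it with $A = V$ and $r := \lceil 4/\epsilon \rceil$ to get a partition $V = X_1 \cup \dots \cup X_l$ with $l \le C r^{O(t)} = O((1/\epsilon)^{O(t)})$ and, for each $i$, a ``boundary'' set $S_i := \{a \in V : \exists\, u,v \in X_i,\ ua \in E,\ va \notin E\}$ with $|S_i| \le |V|/r$, so that every vertex of $V \setminus S_i$ is complete or anticomplete to $X_i$. For any pair $(i,j)$ the rectangle $(X_i \setminus S_j) \times (X_j \setminus S_i)$ is then homogeneous, i.e.\ $E$ is constant on it: for $u \in X_i \setminus S_j$ and $a \in X_j \setminus S_i$, whether $ua \in E$ depends only on $a$ (as $a \notin S_i$) and only on $u$ (as $u \notin S_j$), hence is constant on the whole product. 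The vertex pairs omitted by passing to these reduced rectangles form a subset of $\bigcup_{i,j}\big((S_j \cap X_i) \times X_j \cup X_i \times (S_i \cap X_j)\big)$, of total size at most $2\sum_j |X_j|\,|S_j| \le 2|V|^2/r \le \frac{\epsilon}{2}|V|^2$.

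The one genuine obstacle is that these good rectangles do not yet form a partition, because $X_i \setminus S_j$ varies with $j$; one must refine each $X_i$ by the traces of the boundary sets $\{S_j\}_j$. Refining by all $2^l$ Boolean patterns would be doubly exponential in $1/\epsilon$ and useless, so the crucial point is that $\{S_1,\dots,S_l\}$ is a uniformly definable family --- its members arise from the cell decomposition of $E$ by a fixed first-order recipe --- hence has VC dimension bounded by some $d' = d'(t)$; therefore the arrangement of these $l$ sets has only $O(l^{d'})$ cells, still polynomially many in $1/\epsilon$. Refining every $X_i$ accordingly (and discarding empty atoms) produces the final partition $V = V_1 \cup \dots \cup V_k$ with $k = O((1/\epsilon)^{c})$, together with the set $\Sigma$ consisting of those pairs $(V_a, V_b)$, say with $V_a \subseteq X_i$ and $V_b \subseteq X_j$, for which $V_a \cap S_j = \emptyset$ and $V_b \cap S_i = \emptyset$; condition~(2) follows from the size bound above and condition~(3) from homogeneity of the reduced rectangles. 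Tracking exactly how $c$ depends on the exponents $d(t)$ and $d'(t)$ is routine bookkeeping --- which is precisely why invoking the already-packaged statement of \cite{regularity} is the cleaner route.
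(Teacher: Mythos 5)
Your proposal is correct and follows the paper's route: the paper obtains this statement (formally Theorem~\ref{theorem:regularity_lemma}) by applying the distal regularity lemma of~\cite{regularity} to the structure $\M_t$ equipped with the definable abstract cell decomposition of Theorem~\ref{lemma:abstract_cell_decomposition_for_m_t}, which is exactly the ``cleaner route'' your first paragraph identifies. Your self-contained derivation through the cutting lemma is also essentially the proof that~\cite{regularity} packages; the only imprecision is in the refinement step, where the polynomial bound on the number of atoms of the arrangement $\{S_1,\dots,S_l\}$ comes from the \emph{dual} shatter function of the fixed formula defining the $S_j$'s (finite because NIP, established via Lemma~\ref{lemma:m_t_is_nip}, is preserved under swapping the roles of the variables), not from the primal VC dimension as written --- this changes the exponent $c(t)$ but not the conclusion.
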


\paragraph*{Structure of the paper}

The paper is structured as follows.
In Section \ref{chapter:preliminaries} we introduce the preliminary notions.
Then, the proof of Theorem \ref{theorem:neighborhood-complexity-theorem} is presented in Section \ref{chap:nc_proof}.
After the proof of Theorem \ref{theorem:neighborhood-complexity-theorem} we switch our attention to the notion of distality and abstract cell decompositions in Section \ref{chap:distality}.
In particular, we translate the theorems about distal combinatorial tools into the setting of classes of finite structures.
Then, in Section \ref{chap:encoding} we give a direct combinatorial proof that the edge relation is distal in classes of ordered graphs of bounded twin-width.
Finally, in Section \ref{chap:conclusions} we present conclusions that follow from our work.

\section{Preliminaries}
\label{chapter:preliminaries}

We denote by $[i]$ the set of integers $\set{\seqi i}$.
If $\mathcal X$ is a set of sets, we denote by $\bigcup \mathcal X$ the union of them.

\subsection{Graph definitions and notations}
In this paper we investigate only undirected simple graphs, i.e. graphs with no multiple-edges nor self-loops.
We denote by $V(G)$ the set of vertices of a given graph $G$ and by $E(G)$ the set of its edges.

For a graph $G = (V, E)$ and $A \subseteq V$ we denote by $G - A$ the graph $\set{V \setminus A, \set{(u, v) \in E : u, v \not \in A}}$, i.e. the graph obtained from $G$ by removing all the vertices in $A$ and edges incident to them.

For $A \subseteq V(G)$ and $v \in V(G)$ we denote the neighborhood of $v$ in $A$ by $N^G_A(v)$, i.e. $N^G_A(v) = \set{u \in A : (u, v) \in E(G)}$.
We omit a superscript $G$ whenever the graph is implicit from the context.
We write $N(v)$ for $N_{V(G)}(v)$.
Similarly, we denote the set of all neighborhoods in $A \subseteq V(G)$ by $N_G(A)$, i.e. $N_G(A) = \set{N_A(u) : u \in V(G)}$.
We also write $\nz G(A)$ for the set of all non-empty neighborhoods in $A$, i.e. $\nz G(A) = N_G(A) \setminus \set \emptyset$.

For a graph $G = (V, E)$ and two subsets of its vertices $A, B \subseteq V$ we say that $A$ and $B$ are \emph{homogeneous}, if either for every $v \in A$ and $u \in B$ we have $(v, u) \in E$ or for every $v \in A$ and $u \in B$ we have $(v, u) \not \in E$.
In particular, if $A \cap B \neq \emptyset$ then $A$ and $B$ can be homogeneous only if for every $v \in A$ and $u \in B$ we have $(v, u) \not \in E$.

\subsection{Matrix definitions and notations}
For a matrix $M$ consisting of $m$ rows and $n$ columns (an $m \times n$ matrix) we denote by $M[i][j]$ its entry in the $i$'th row and $j$'th column (assuming that $1 \le i \le m$ and $1 \le j \le n$).
A $p \times q$ submatrix $N$ of an $m \times n$ matrix $M$ (with $p \le m$ and $q \le n$) is any matrix formed by taking $p$ consecutive rows and $q$ consecutive columns of $M$.
For an $m \times n$ matrix $M$ we denote by $M[i:j][k:l]$ (with $1 \le i \le j \le m$ and $1 \le k \le l \le n$) the submatrix of $M$ formed by rows from $i$ to $j$ (inclusive) and columns from $k$ to $l$ (inclusive).

An $m \times n$ matrix $M$ is \emph{vertical} if every two rows of $M$ are equal, i.e. for any $1 \le i, j \le m$ we have $M[i:i][1:n]$ is equal to $M[j:j][1:n]$.
Similarly, $M$ is \emph{horizontal} if every two columns of $M$ are equal, i.e. for any $1 \le i, j \le n$ we have $M[1:m][i:i]$ is equal to $M[1:m][j:j]$.
Observe that if a matrix is both vertical and horizontal, then it is constant.
We say that a matrix is \emph{mixed} if it is neither vertical nor horizontal.
A \emph{corner} is a $2 \times 2$ mixed matrix.
In \cite{twinwidth1} it was proven that a matrix is mixed if and only if it contains a corner as a submatrix.

Given an $m \times n$ matrix $M$, a \emph{row-partition} (resp. \emph{column-partition}) is a partition of the rows (resp. columns) of $M$.
Similarly, a \emph{row-division} (resp. \emph{column-division}) is a row-partition (resp. column-partition), where every part consists of consecutive rows (resp. columns).
A \emph{$(k, l)$-division} (or simply division) of a matrix $M$ is a pair $(\Row, \Col)$ of a row-division and a column-division with respectively $k$ and $l$ parts.
A \emph{zone} of a division $(\Row, \Col) = (\set{\seq Rk}, \set{\seq Cl})$ is any submatrix $R_i \cap C_j$ for $1 \le i \le k$ and $1 \le j \le l$.

A $0, 1$-matrix is a matrix with all its entries equal to $0$ or $1$.
Given a $0, 1$-matrix $M$, a \emph{$t$-grid minor} in $M$ is a $(t, t)$-division of $M$ in which every zone contains a $1$.
Given a $0, 1$-matrix $M$, a \emph{$t$-mixed minor} in $M$ is a $(t, t)$-division of $M$ in which every zone is a mixed submatrix of $M$.
A matrix is \emph{$t$-grid free} (resp. \emph{$t$-mixed free}) if it does not contain a $t$-grid minor (resp. $t$-mixed minor).

If $G$ is an $n$-vertex graph and $\sigma$ is a total ordering of $V(G)$, say, $\seq vn$, then $M_\sigma(G)$ denotes the adjacency matrix of $G$ in the order $\sigma$.
Thus $M_\sigma(G)[i][j]$ is $1$ if $(v_i, v_j) \in E(G)$ and $0$ otherwise.
An \emph{ordered graph} is a pair $(G, \preceq)$ such that $G$ is a graph and $\preceq$ is a total order on $V(G)$.
We say that an ordered graph $(G, \preceq)$ is $t$-mixed free if $M_\preceq(G)$ is $t$-mixed free.
A first-order formula of ordered graphs is a standard first-order formula of graphs that can use additional $\preceq$ symbol, which is interpreted as an order of an ordered graph.

\subsection{Definition of twin-width of graphs}
\label{subsec:twin_width_graphs_def}
The twin-width of a given graph was first defined in \cite{twinwidth1}.
Since then, many tutorials on the topic were given, e.g. \cite{bonnet-tww, pilipczuk-fas} can serve as a good introduction.

The original definition of twin-width uses the notion of a \emph{trigraph}, i.e. a triple $G = (V, E, R)$ where $E$ and $R$ are two disjoint sets of edges on $V$: the (usual) edges and the red edges.
A trigraph $(V, E, R)$ such that $(V, R)$ has maximum degree at most $d$ is a $d$-trigraph.
Any graph $(V, E)$ may be interpreted as the trigraph $(V, E, \emptyset)$.

Given a trigraph $G = (V, E, R)$ and two vertices $u, v$ in $V$, we define the trigraph $G/u, v = (V', E', R')$ obtained by identifying $u, v$ into a new vertex $w$ as the trigraph on a new vertex-set $V' = (V \setminus \set{u, v}) \cup \set w$ such that $G - \set{u, v} = (G/u, v) - \set w$ and the following edges incident to $w$:
\begin{itemize}
    \item $(w, x) \in E'$ if and only if $(u, x) \in E$ and $(v, x) \in E$,
    \item $(w, x) \not \in E' \cup R'$ if and only if $(u, x) \not \in E$ and $(v, x) \not \in E$,
    \item $(w, x) \in R'$ otherwise.
\end{itemize}
We say that $G/u, v$ is a \emph{contraction} of $G$.
If both $G$ and $G/u, v$ are $d$-trigraphs, $G/u, v$ is a $d$-contraction.

A (tri)graph $G$ is $d$-collapsible if there exists a sequence of $d$-contractions which contracts $G$ to a single vertex.
The minimum $d$ for which $G$ is $d$-collapsible is the \emph{twin-width} of $G$, denoted $\tww(G)$.

\subsection{Theorems related to twin-width of graphs}
The following grid minor theorem for twin-width proven in \cite{twinwidth1} describes a connection between twin-width of a graph and mixed minors of its matrix.
\begin{theorem}
    \label{theorem:grid-minor-theorem}
    If $G$ is a graph of twin-width less than $t$, then there is a total ordering on its vertices $\sigma$ such that $M_\sigma(G)$ is $2t + 2$-mixed free.
    On the other hand, if $G$ is a graph and $\sigma$ is a total ordering on its vertices such that $M_\sigma(G)$ is $k$-mixed free, then $\tww(G) = 2^{2^{O(k)}}$.
\end{theorem}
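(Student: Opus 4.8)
The plan is to prove the two implications separately; both hinge on the standard dictionary identifying a $d$-contraction sequence of $G$ with a chain of partitions $\mathcal P_n, \mathcal P_{n-1}, \dots, \mathcal P_1$ of $V(G)$, where $\mathcal P_n$ is the singleton partition, $\mathcal P_1 = \{V(G)\}$, each $\mathcal P_{i-1}$ is obtained from $\mathcal P_i$ by merging two parts, and every part of every $\mathcal P_i$ is ``inhomogeneous'' with at most $d$ other parts --- two parts $P \neq Q$ being inhomogeneous exactly when the submatrix of the adjacency matrix with rows $P$ and columns $Q$ is non-constant. So the first implication is about reading a good order off such a chain, and the second about building such a chain from a good order.

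\textbf{From bounded twin-width to a mixed-free order.} Assume $G$ is $(t-1)$-collapsible and fix a witnessing chain; it is recorded by a binary tree $T$ on leaf set $V(G)$, and I would take $\sigma$ to be any leaf order of a plane drawing of $T$, so that every node's leaves form a $\sigma$-interval and each $\mathcal P_i$ partitions $V(G)$ into $\sigma$-intervals, with all inhomogeneity degrees at most $t-1$. Suppose, for contradiction, that $M_\sigma(G)$ has a $(2t+2)$-mixed minor $(\mathcal R, \mathcal C)$, so every zone contains a corner, and hence in each zone two of its rows disagree on one of its columns. The pivotal observation is that if a single part $P \in \mathcal P_i$ contains two rows disagreeing on a column $c \notin P$, then the part of $\mathcal P_i$ containing $c$ is inhomogeneous with $P$. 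The remaining bookkeeping --- elementary but fiddly --- chooses a row band $R^a$ and a time $i$ so that the part $P \in \mathcal P_i$ that first contains all of $R^a$ avoids the columns of at least $t$ of the bands $C^b$; each of those bands then supplies, via its corner in the zone $R^a \cap C^b$, a pair of rows in $R^a \subseteq P$ disagreeing on a column of $C^b$ lying outside $P$, hence $t$ distinct parts inhomogeneous with $P$, contradicting the degree bound $t-1$. The constant $2t+2$ (rather than something closer to $t$) is what leaves room for this slack: the bands one actually uses are coarsenings of the original ones, and a few must be discarded as they get swallowed by $P$.

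\textbf{From a mixed-free order to bounded twin-width.} Given $\sigma$ with $M := M_\sigma(G)$ being $k$-mixed free, I would build a $d$-contraction sequence with $d = 2^{2^{O(k)}}$. The scaffold is again to maintain a division of $M$ into consecutive blocks (a partition of $V(G)$ into $\sigma$-intervals) and shrink the number of blocks in $\Theta(\log n)$ phases, merging adjacent blocks in bounded groups, the goal being that no block is ever inhomogeneous with more than $d$ others. After normalising $M$ by collapsing maximal runs of equal consecutive rows and columns (which kills the ``vertical'' or ``horizontal'' non-constant blocks carrying no corner), a non-constant block of a coarse division is a mixed contiguous submatrix of $M$, so a block inhomogeneous with many others would, together with them, realise a large mixed minor of $M$ --- forbidden. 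Quantitatively, encoding the corner positions of $M$ in an auxiliary $0,1$-matrix and invoking the Marcus--Tardos theorem \cite{marcustardos} --- a $t$-grid-free matrix has only $O_t(n)$ ones, with $O_t$ exponential in $t$, while grid minors of the auxiliary matrix translate back into mixed minors of $M$ --- shows that $M$ has only $O_k(n)$ corners, so in a sufficiently coarse uniform division few blocks are non-constant. \emph{The main obstacle is to turn this local scarcity of corners into a global construction whose inhomogeneity bound survives every one of the $\Theta(\log n)$ phases without compounding}: this forces $d$ to be essentially a fixed point of a Marcus--Tardos-type operator, and since that operator is already exponential in its pattern parameter and has to be applied in a nested way --- once to find the good coarse divisions and once to control the residual structure inside the few non-constant blocks --- the resulting $d$ is double-exponential in $k$, matching the stated $2^{2^{O(k)}}$. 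Treating the diagonal blocks and the normalisation step carefully is a further, routine, layer.
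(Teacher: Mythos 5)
The paper states this theorem as a citation from \cite{twinwidth1} and gives no proof of its own, so there is no ``paper's proof'' to compare against; I evaluate your sketch against the known argument.

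\textbf{Forward direction.} Your approach (take a leaf order of the contraction tree so that every part of every $\mathcal{P}_i$ is a $\sigma$-interval, then derive a red-degree contradiction from a large mixed minor) is the right one. But the bookkeeping does not quite close as written. You claim that the part $P$ which first swallows a whole row band avoids ``at least $t$'' column bands and that those supply ``$t$ distinct parts inhomogeneous with $P$''. The second step fails: two distinct column bands disjoint from $P$ can both lie inside the same part $Q$, so $t$ avoided bands only guarantee $\lceil t/2\rceil$ distinct inhomogeneous neighbours. The correct accounting uses minimality twice: $P$ is the union of two parts each of which, before the merge, contained no full band and hence met at most $2$ bands of either kind, so $P$ meets at most $3$ column bands and avoids at least $2t-1$ of the $2t+2$; and every part $Q\neq P$ at that moment still meets at most $2$ column bands. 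Packing $\geq 2t-1$ avoided bands two per part yields $\geq t$ distinct inhomogeneous neighbours, contradicting red degree $\leq t-1$, and this is exactly what forces the constant $2t+2 = (2t-1)+3$.

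\textbf{Reverse direction.} Here there is a genuine gap. You name the right tools (Marcus--Tardos applied to a corner-indicator matrix; a $\Theta(\log n)$-phase contraction of $\sigma$-intervals) but do not give the construction --- you explicitly flag the crux (``turn this local scarcity of corners into a global construction whose inhomogeneity bound survives every phase'') and then assert the double-exponential bound rather than derive it. Moreover, the bridge lemma you lean on is false as stated: globally collapsing equal consecutive rows and columns of $M$ does \emph{not} make every non-constant zone of a coarse division mixed, since a zone may be non-constant yet purely horizontal or vertical (its columns can agree on that zone's rows while differing elsewhere in $M$). Without a mechanism that converts ``many inhomogeneous zones touching one block'' into ``many \emph{mixed} zones'', the $k$-mixed-free hypothesis and the Marcus--Tardos bound never apply. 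The actual proof in the cited paper handles this with a substantially finer two-level partition scheme plus a separate bound on non-mixed non-constant zones, and that is precisely where the double exponential comes from; that machinery is absent here.
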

A crucial tool for this proof is the Marcus-Tardos theorem about grid minors.
\begin{theorem}[\cite{marcustardos}]
    \label{theorem:marcus-tardos}
    For every integer $t$, there is some $c_t$ such that every $m \times n$ $0, 1$-matrix $M$ with at least $c_t \max(m, n)$ entries $1$ has a $t$-grid minor.
\end{theorem}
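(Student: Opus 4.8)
The plan is to prove the statement in its contrapositive form by the classical Marcus--Tardos block-decomposition and recursion argument: writing $g(a,b)$ for the largest number of $1$-entries in a $t$-grid-free $a\times b$ $0,1$-matrix, I will show $g(a,b)\le c_t\max(a,b)$ for a suitable constant $c_t=c_t(t)$, by induction on $a+b$.

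For the inductive step, fix a block size $s=s(t)$; it will turn out that any $s>(t-1)^2$ works, so one may take $s=t^2$. Given a $t$-grid-free matrix $M$, pad it with zero rows and columns so that $s\mid m$ and $s\mid n$ --- this adds no $1$-entries and changes the dimensions only by an additive constant --- and partition $M$ into an $(m/s)\times(n/s)$ array of $s\times s$ \emph{blocks}. Two structural facts drive the recursion. First, the \emph{reduced matrix} $M'$, which carries a $1$ in a position exactly when the corresponding block of $M$ is nonempty, is again $t$-grid free: each interval of consecutive block-rows (resp.\ block-columns) is an interval of consecutive rows (resp.\ columns) of $M$, so any $t$-grid minor of $M'$ lifts to a $(t,t)$-division of $M$ in which every zone contains a block that witnessed a $1$ of $M'$, hence a $1$ of $M$. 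Second, call a block \emph{wide} if its $1$-entries occupy at least $t$ distinct columns, \emph{tall} if they occupy at least $t$ distinct rows, and \emph{small} otherwise; a small block has at most $(t-1)^2$ ones.

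The crux --- and the step I expect to be the main obstacle --- is a pigeonhole lemma: each block-column of $M$ contains fewer than $t\binom{s}{t}$ wide blocks, and, symmetrically, each block-row contains fewer than $t\binom{s}{t}$ tall blocks. In the wide case, a wide block lying in a fixed block-column singles out a $t$-element set of columns of that block-column, each of which contains one of its $1$-entries; if the block-column held $t\binom{s}{t}$ wide blocks, pigeonhole would give $t$ of them --- say $B_1,\dots,B_t$ from top to bottom --- all singling out the same set $\{c_1<\dots<c_t\}$. Now take a row-division of $M$ into $t$ intervals whose $i$-th part contains the block-row of $B_i$, and a column-division of $M$ into $t$ intervals whose $j$-th part contains column $c_j$ of the block-column housing $B_1,\dots,B_t$. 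This is a $(t,t)$-division in which the zone $(i,j)$ contains the $1$-entry that $B_i$ has in its column $c_j$ --- a $t$-grid minor, contradicting $t$-grid freeness. The tall statement is the transpose of this one. Summing over the $n/s$ block-columns and $m/s$ block-rows, the total number of wide or tall blocks is at most $2t\binom{s}{t}\cdot\max(m,n)/s$.

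Finally I would assemble the bound. The $1$-entries of $M$ lying in wide or tall blocks number at most $s^2$ per block, hence at most $O_t(s\cdot\max(m,n))$ altogether; those lying in small blocks number at most $(t-1)^2$ per nonempty small block, and there are at most $g(m/s,n/s)$ nonempty blocks by the first structural fact. Hence
\[
  g(m,n)\ \le\ O_t\bigl(s\cdot\max(m,n)\bigr)+(t-1)^2\cdot g(m/s,\,n/s).
\]
Because $(t-1)^2/s<1$, unwinding this recursion produces a convergent geometric series and yields $g(m,n)\le c_t\max(m,n)$ with $c_t$ an explicit constant depending only on $t$ (re-absorbing the padding costs another constant factor). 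This is exactly the asserted bound; as a side remark, the more familiar form of Marcus--Tardos --- a matrix avoiding a fixed $t\times t$ permutation matrix has $O(\max(m,n))$ ones --- follows immediately, since a $t$-grid minor realizes every $t\times t$ permutation pattern, and so such a matrix is in particular $t$-grid free.
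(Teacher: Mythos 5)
The paper does not prove Theorem~\ref{theorem:marcus-tardos}; it imports it from \cite{marcustardos} and merely records the explicit constants $c_t$ obtained by Marcus--Tardos ($2t^4\binom{t^2}{t}$) and by Cibulka--Kyn\v{c}l. Your proposal is a correct, self-contained reproduction of the original Marcus--Tardos block-decomposition argument --- padding to a multiple of $s$, observing the reduced matrix is again $t$-grid free, classifying blocks as wide/tall/small, the $t\binom{s}{t}$ pigeonhole bound on wide blocks per block-column (and its transpose), and the geometric recursion driven by $(t-1)^2/s<1$ --- and it delivers a constant of the same order as the one the paper cites, so there is nothing in the paper for it to diverge from.
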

Marcus and Tardos established Theorem \ref{theorem:marcus-tardos} with $c_t = 2t^4 \binom{t^2}t$.
Then Cibulka and Kynčl \cite{constants} decreased $c_t$ to $8/3(t+1)^22^{4t}$.
In the rest of this paper we keep the notation $c_t$ for these constants.

\subsection{Vapnik-Chervonenkis density of graphs}
In this section we introduce notions of Vapnik-Chervonenkis dimension (VC-dimension) and Vapnik-Chervonenkis density (VC-density) in the context of graphs.
It is worth noting that it was first introduced in \cite{vcdim} and originated in statistics, but it is also an important notion in combinatorics and statistical learning theory.

We say that $A \subseteq V(G)$ is \emph{shattered} if $N_G(A) = 2^A$, i.e. for every subset of $A$ there is a vertex $v \in V(G)$ such that $N_A(v)$ is precisely that subset of $A$.
For a graph $G$ we define its \emph{VC-dimension} as the size of the largest subset of $V(G)$, which is shattered, and denote it by $\dim(G)$.
We denote the \emph{shatter function} of a graph $G$ by $\pi_G: \N \to \N$ and define it as follows:
\[
    \pi_G(k) = \max_{A \subseteq V(G), |A| \le k} |N_G(A)|.
\]
For a non-empty class of graphs $\C$ we define its VC-dimension as the supremum over VC-dimensions of graphs in $\C$ and denote it by $\dim(\C)$.
Similarly, we denote its shatter function by $\pi_\C: \N \to \N$ and define it by:
\[
    \pi_\C(k) = \max \set{\pi_G(k): G \in \C}.
\]
Observe that this is well-defined, as $\pi_G(k) \le 2^k$ for any graph $G$ and $k \in \N$.

We can bound the shatter function of a graph class $\C$ in terms of $\dim(\C)$.
This is stated in the shatter function lemma proven independently by Sauer \cite{sauer} and Shelah \cite{shelah}.
\begin{lemma}[Shatter function lemma]
    \label{lem:shatter-function-lemma}
    If $\C$ is a graph class such that $\dim(\C) \le d$, then $\pi_\C(k) = O(k^d)$.
\end{lemma}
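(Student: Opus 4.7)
The plan is to reduce this to the classical Sauer--Shelah inequality for abstract set systems. Given $G \in \C$ and a subset $A \subseteq V(G)$ with $|A| \le k$, I would consider the induced neighborhood system $\F_A = \set{N_A(v) : v \in V(G)} \subseteq 2^A$. By definition $|N_G(A)| = |\F_A|$, and a moment's thought shows that any $B \subseteq A$ shattered by $\F_A$ (in the set-system sense: every $T \subseteq B$ has the form $S \cap B$ for some $S \in \F_A$) is also shattered in $G$, because such an $S$ is a neighborhood $N_A(v)$ and then $N_B(v) = T$. Thus the VC-dimension of $\F_A$ on the ground set $A$ is at most $\dim(G) \le d$, and the task reduces to the abstract Sauer--Shelah statement: any set system $\F \subseteq 2^X$ on a finite $X$ whose VC-dimension is at most $d$ satisfies $|\F| \le \sum_{i=0}^{d} \binom{|X|}{i} = O(|X|^d)$.

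For this abstract statement I would run Pajor's elegant induction establishing the sharper bound
\[
    |\F| \le \bigl|\set{B \subseteq X : B \text{ is shattered by } \F}\bigr|.
\]
The case $X = \emptyset$ is immediate. For the inductive step fix $a \in X$, set $X' = X \setminus \set a$, and consider two auxiliary systems on $X'$: the projection $\F' = \set{S \cap X' : S \in \F}$ and the ``doubled'' part $\F'' = \set{S \subseteq X' : S \in \F \text{ and } S \cup \set a \in \F}$. A short count, grouping the elements of $\F$ by their intersection with $X'$ (each projection fiber has size $1$ or $2$, and the fibers of size $2$ are indexed by $\F''$), yields $|\F| = |\F'| + |\F''|$. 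Applying the inductive hypothesis to $\F'$ and $\F''$ on the smaller ground set $X'$, and noting that shattered sets of $\F'$ correspond to shattered sets of $\F$ avoiding $a$ while shattered sets $B$ of $\F''$ correspond to shattered sets $B \cup \set a$ of $\F$ (since both $S$ and $S \cup \set a$ lie in $\F$ for every $S \subseteq B$ witnessing the shattering), the two contributions add up to the number of shattered subsets of $X$, as required.

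Plugging this back, the lemma follows: because $\dim(G) \le d$ rules out any shattered subset of $A$ of size exceeding $d$, the number of shattered subsets of $A$ is bounded by $\sum_{i=0}^d \binom{|A|}{i}$, which is $O(|A|^d) = O(k^d)$ with a constant depending only on $d$. I do not expect any genuine obstacle in this argument, since the result is classical; the only mildly delicate point is the identity $|\F| = |\F'| + |\F''|$, which one verifies by summing, over each $T \subseteq X'$, the quantity $|\set{T, T \cup \set a} \cap \F|$.
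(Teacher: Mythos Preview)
Your argument is correct; this is Pajor's standard proof of the Sauer--Shelah bound, and the reduction from the graph-class formulation to the abstract set-system statement is handled cleanly. The paper itself does not prove this lemma at all---it simply states it and cites \cite{sauer} and \cite{shelah}---so your proposal in fact supplies strictly more than the paper does.
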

 For a class of graphs $\C$ we also define its \emph{VC-density} (denoted by $\vc(\C)$) as follows:
\iflipics
\[
    \vc(\C) =
    \begin{cases}
        \inf \set{r \in \R : r > 0, \pi_\C(k) = O(k^r)} &\text{if $\dim(\C)$ is finite,}\\
        +\infty &\text{otherwise.}
    \end{cases}
\]
\else
\[
    \vc(\C) =
    \begin{cases}
        \inf \set{r \in \R : r > 0, \pi_\C(k) = O(k^r)} &\parbox{1.75cm}{if $\dim(\C)$ is finite,}\\
        +\infty &\text{otherwise.}
    \end{cases}
\]
\fi

\section{Neighborhood complexity of classes of graphs of bounded twin-width}
\label{chap:nc_proof}

In this section we prove that classes of graphs of bounded twin-width admit linear neighborhood complexity.
This is formalized in Theorem \ref{theorem:neighborhood-complexity-theorem}.

\begin{theorem}
    \label{theorem:neighborhood-complexity-theorem}
    For every integer $t$, there is some $n_t$ such that for every graph $G$ of twin-width at most $t$ and every non-empty $A \subseteq V(G)$ we have $|N_G(A)| \le n_t|A|$.
\end{theorem}

The following definition and two lemmas will be useful for the proof.

\begin{definition}
\label{definition:corner-matrix}
Let $A$ be an $m \times n$ $0, 1$-matrix, where $m, n \ge 2$.
We define \textbf{the corner matrix $B$ of matrix $A$} as the $(m - 1) \times (n - 1)$ matrix given by:
\[
B[i][j] = \begin{cases}
    1 &\text{if $M[i:(i+1)][j:(j+1)]$ is a corner,} \\
    0 &\text {otherwise.}
\end{cases}
\]
\end{definition}

An example of a corner matrix is presented in Figure \ref{fig:corner_matrix}.

\begin{figure}
    \centering
    \includegraphics[scale=0.5,page=1]{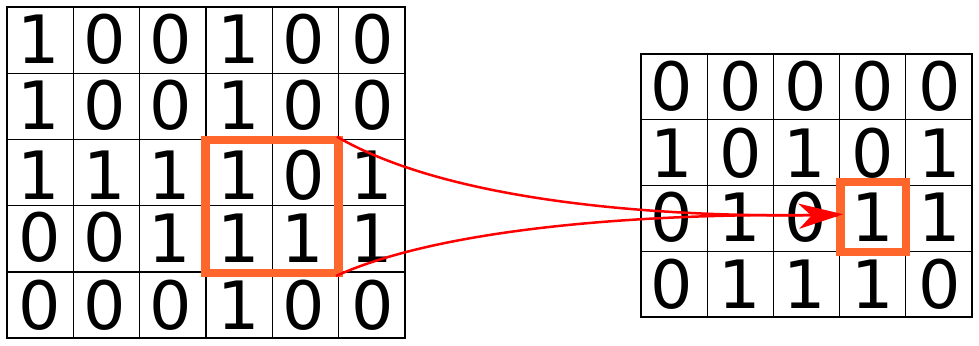}
    \caption{Example of a corner matrix}
    \label{fig:corner_matrix}
\end{figure}

\begin{lemma}
\label{lemma:corner-matrix-grid-free}
Let $A$ be a $0, 1$-matrix, which is $t$-mixed free.
Then its corner matrix $B$ is $2t$-grid free.
\end{lemma}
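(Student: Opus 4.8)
The plan is to argue by contrapositive: suppose the corner matrix $B$ contains a $2t$-grid minor; I will produce a $t$-mixed minor in $A$, contradicting $t$-mixed freeness. So fix a $(2t, 2t)$-division $(\Row, \Col)$ of $B$ in which every zone contains an entry equal to $1$. Recall $B$ has one fewer row and one fewer column than $A$, and $B[i][j] = 1$ precisely when the $2 \times 2$ submatrix $A[i{:}i{+}1][j{:}j{+}1]$ is a corner (a mixed $2 \times 2$ matrix).

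The key idea is a standard "merge every second block" trick to pass from $2t$ blocks in $B$ to $t$ blocks in $A$ with a safety margin. Write the row-division of $B$ as consecutive blocks $I_1, \dots, I_{2t}$ of row-indices and the column-division as $J_1, \dots, J_{2t}$. Form a coarser row-division of $B$ into $t$ blocks by grouping $I_1, I_2$; then $I_3, I_4$; and so on, and similarly a coarser column-division into $t$ blocks. Now lift each coarse block of $B$-indices to a block of $A$-indices: if a coarse row block of $B$ covers $B$-rows $a, a+1, \dots, b$, I assign it the $A$-rows $a, a+1, \dots, b+1$. Because we merged pairs of original blocks, consecutive coarse blocks of $B$ overlap in at most the single shared boundary row when lifted to $A$ — more precisely, the $A$-row block lifted from a coarse block ending at $B$-row $b$ ends at $A$-row $b+1$, and the next coarse block starts at $B$-row $b+1$, hence at $A$-row $b+1$; so the lifted $A$-blocks share exactly one row. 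To genuinely get a \emph{division} of $A$ (disjoint consecutive blocks) rather than overlapping blocks, I instead use the original (not merged) structure more carefully: here is where the factor $2$ is spent. Concretely, from the $2t$ blocks $I_1, \dots, I_{2t}$ of $B$, I define $t$ disjoint consecutive blocks of $A$-rows by letting the $k$-th block consist of the $A$-rows spanned by $I_{2k-1}$ together with the first row of $I_{2k}$ (equivalently, the $A$-rows from $\min I_{2k-1}$ up to $\min I_{2k}$), and analogously for columns; the leftover rows (the tails of the even-indexed blocks, except possibly the last) can be absorbed into the adjacent block or simply appended to the final block, since we only need \emph{a} division, not one respecting the original boundaries. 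The point of this construction is that the $k$-th $A$-row block contains all of $B$-block $I_{2k-1}$ \emph{and} one extra $A$-row below it, so any corner witnessed inside $I_{2k-1} \times J_{2\ell-1}$ of $B$ sits inside the $(k, \ell)$-zone of the $A$-division.

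It remains to check that every zone of this $(t,t)$-division of $A$ is mixed. Fix a zone, corresponding to coarse blocks built from $I_{2k-1}$ (plus a margin row) and $J_{2\ell-1}$ (plus a margin column). The zone $I_{2k-1} \times J_{2\ell-1}$ of $B$ contains a $1$, say at position $(i, j)$, meaning $A[i{:}i{+}1][j{:}j{+}1]$ is a corner; since $i \in I_{2k-1}$ and the $A$-block also contains the row immediately below $I_{2k-1}$, both $A$-rows $i$ and $i+1$ lie in the $A$-row block of the zone, and likewise columns $j$ and $j+1$ lie in its $A$-column block. Hence the zone of $A$ contains this corner as a submatrix. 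By the cited characterization from \cite{twinwidth1} (a matrix is mixed iff it contains a corner), the zone is mixed. Thus $(\Row', \Col')$ is a $t$-mixed minor of $A$, contradicting the hypothesis; therefore $B$ is $2t$-grid free.

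**Main obstacle.** The only delicate point is bookkeeping the index arithmetic so that the lifted blocks form an honest division of $A$ (consecutive, disjoint, covering) while guaranteeing each selected $A$-block strictly contains one of the original $B$-blocks \emph{plus its boundary row/column}; the "merge pairs of blocks" device is exactly what creates the room for that boundary row, which is why $2t$ blocks in $B$ collapse to $t$ blocks in $A$ rather than to $2t$. I expect the write-up to require care in stating which rows go to which block and in handling the final leftover block, but no substantive difficulty beyond that.
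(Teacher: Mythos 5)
Your argument is correct and is essentially the paper's proof: both select the odd-indexed row and column blocks $R_1, R_3, \ldots, R_{2t-1}$ (resp.\ $C_1, C_3, \ldots$) of the $(2t,2t)$-grid minor of $B$, lift each to $A$-rows/columns by appending the single boundary row/column from the next block, observe that the $2t$-to-$t$ step guarantees disjointness, and note that a $1$ in a $B$-zone places a $2\times 2$ corner entirely inside the corresponding $A$-zone. The extra bookkeeping you do (discarding the initial merge-pairs attempt, absorbing leftover rows to form a genuine division) only makes explicit what the paper leaves implicit and does not change the substance.
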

\begin{proof}
Let us assume by contradiction, that $B$ admits a $(2t, 2t)$-division $(\Row, \Col) = (\set{\seq R{2t}},\linebreak[1] \set{\seq C{2t}})$, which is a $2t$-grid minor.
Every part $R_k$, which consists of $l$ rows of $B$, corresponds to $l+1$ consecutive rows of $A$.
Precisely, if $R_k$ consists of rows $i$ to $i + l - 1$ of $B$, then entries in these rows depend only on rows $i$ to $i + l$ of $A$.
Moreover, whenever $|k - m| > 1$, $R_k$ and $R_m$ correspond to disjoint sets of rows of $A$ -- in particular $R_1, R_3, \ldots, R_{2t - 1}$ correspond to $t$ disjoint sets of rows of $A$.
The same happens for $C_1, C_3, \ldots, C_{2t-1}$.
These sets of rows and columns induce a $t$-mixed minor of $A$.
This is a contradiction to $A$ being $t$-mixed free.
\end{proof}

\begin{lemma}
\label{lemma:bounding-number-of-columns}
Let P be an $m \times n$ $0, 1$-matrix such that all corners in $P$ appear only in $p$ different pairs of consecutive rows, i.e. there are $p$ distinct integers $1 \le \seq ip < m$ such that every corner in $P$ is of a form $P[i_j:(i_j+1)][k:(k+1)]$ for some $1 \le j \le p$ and $1 \le k < n$.
Then $P$ has at most $2^{p+1}$ different columns.
\end{lemma}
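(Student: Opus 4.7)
The plan is to partition the rows of $P$ into $p+1$ consecutive blocks determined by the special indices $i_1 < \ldots < i_p$, and to prove that the restriction of a column of $P$ to each block admits at most two distinct values. The bound $2^{p+1}$ then follows immediately, since every column of $P$ is determined by the tuple of its restrictions to these blocks, and the number of such tuples is at most $2^{p+1}$.

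Setting $i_0 = 0$ and $i_{p+1} = m$, I would define the chunks $I_j = [i_j + 1, i_{j+1}]$ for $j = 0, 1, \ldots, p$, giving a partition of $[1, m]$ into $p+1$ consecutive intervals. For any $j$ and any $k$ with $i_j + 1 \le k < i_{j+1}$, the index $k$ lies strictly between two consecutive special indices, so $(k, k+1)$ is \emph{not} one of the $p$ special row-pairs. By hypothesis the submatrix $P[k:(k+1)][1:n]$ then contains no corner, and by the fact recalled in the preliminaries (a matrix is mixed iff it contains a corner), this $2 \times n$ submatrix is not mixed. Hence it is either vertical (rows $k$ and $k+1$ of $P$ coincide as vectors) or horizontal (each of rows $k$ and $k+1$ of $P$ is constant across columns).

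I would then analyze a single chunk $I_j$ in two cases. If every consecutive pair $(k, k+1)$ inside $I_j$ is vertical, a straightforward induction gives that all rows of $P$ indexed by $I_j$ are equal as vectors, so the restriction of any column to $I_j$ is the all-$0$ or the all-$1$ vector, yielding at most two possibilities. If, on the other hand, at least one consecutive pair $(k, k+1)$ inside $I_j$ is horizontal, then rows $k$ and $k+1$ are both constant across columns; using that every other adjacent pair in $I_j$ is also vertical or horizontal, this constancy propagates forward and backward through $I_j$ (if the next pair is vertical the next row equals a constant row, and if it is horizontal the next row is constant by definition), so every row indexed by $I_j$ is constant, and therefore the restriction of any column to $I_j$ is uniquely determined.

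In both cases the restriction of a column to $I_j$ admits at most two values, so multiplying over the $p+1$ chunks yields at most $2^{p+1}$ distinct columns of $P$. The only point requiring care is the two-sided propagation in the second case, namely the verification that both a vertical and a horizontal neighbor of a constant row is again constant; this is a small case analysis on the four types $(\text{vertical/horizontal}) \times (\text{forward/backward})$, and it is the only part of the argument where the precise definitions of vertical and horizontal must be applied rather than merely cited.
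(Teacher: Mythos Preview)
Your proof is correct and follows essentially the same idea as the paper's: cut the rows at the $p$ special indices into $p+1$ blocks, each contributing a factor of $2$ to the number of distinct columns. The paper packages this as an induction on $p$ (peeling off the last block and applying the base case to it), whereas you do all blocks at once; your per-block case analysis is also slightly more work than needed, since the entire submatrix $P[I_j][1:n]$ is corner-free and hence, by the same cited fact, already vertical or horizontal as a whole, without the pair-by-pair propagation.
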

\begin{proof}
We proceed by induction on $p$.
In the base case, i.e. when $p = 0$, $P$ does not contain any corner, so it is either horizontal or vertical.
If $P$ is horizontal, then all its columns are equal by definition.
On the other hand, when $P$ is vertical, then every column of $P$ contains only $0$s or only $1$s.
Therefore, $P$ contains at most $2$ different columns.

Now let us assume $p \ge 1$ and the statement is true for $p - 1$.
Without loss of generality $1 \le i_1 < \ldots < i_p < m$.
We can split $P$ horizontally into two submatrices $P_1 = P[1:i_p][1:n]$ and $P_2 = P[(i_p + 1):m][1:n]$.
All the corners of $P_1$ are contained in $p - 1$ different pairs of consecutive rows, so it contains at most $2^p$ different columns (see Figure \ref{fig:different-columns-lemma} for an illustration).
Moreover, $P_2$ does not contain any corner at all, so it contains at most $2$ different columns.
Every column of $P$ is made up using one column of $P_1$ and one column of $P_2$, so there are at most $2^p \cdot 2 = 2^{p+1}$ different columns in $P$.
\end{proof}

\begin{figure}
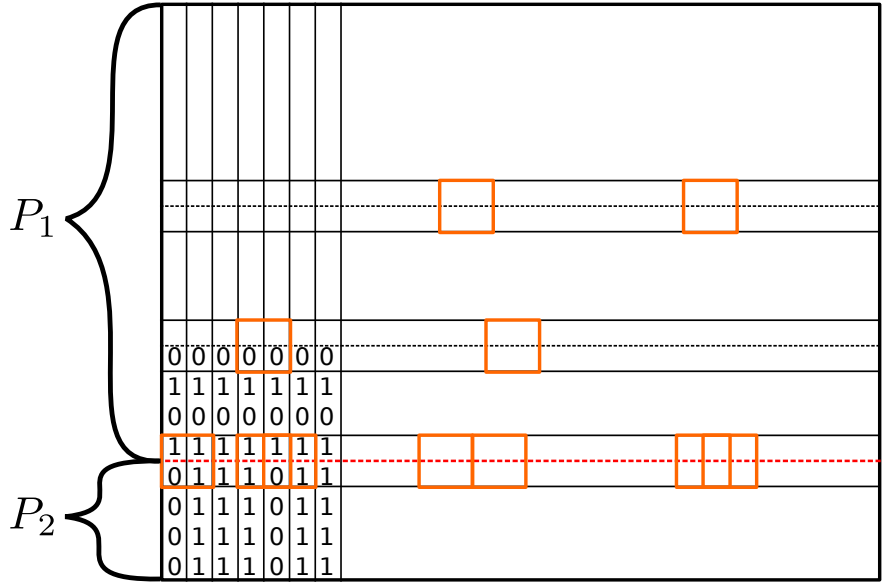

    \centering
    \iflipics
    \includegraphics[scale=0.8,page=2]{images/figures.pdf}
    \else
    \includegraphics[scale=0.55,page=2]{images/figures.pdf}
    \fi
    \caption{A matrix $P$ with all its corners (in orange) contained in three pairs of consecutive rows is split into $P_1$ and $P_2$ in the inductive step.}
    \label{fig:different-columns-lemma}
\end{figure}

We also state an easy corollary which follows directly from the proof of Lemma \ref{lemma:bounding-number-of-columns} and will be useful in later sections.

\begin{corollary}
    \label{cor:representative_set}
    Let P be an $m \times n$ $0, 1$-matrix such that all corners in $P$ appear only in $p$ different pairs of consecutive rows.
    Then, there is a subset $I \subseteq [m]$ of at most $p + 1$ rows such that whenever two columns $j$ and $j'$ satisfy $P[i][j] = P[i][j']$ for every $i \in I$, then they are equal.
\end{corollary}

Now we prove that graphs of bounded twin-width have linear neighborhood complexity.

\begin{proof}[Proof of Theorem \ref{theorem:neighborhood-complexity-theorem}]
Let us take any graph $G$ such that $\tww(G) \le t$ and a non-empty subset $A \subseteq V(G)$ of vertices of $G$.
By Theorem \ref{theorem:grid-minor-theorem} there is a total ordering $\sigma$ on $V(G)$ such that $M_\sigma(G)$ is $2t+2$-mixed free.
For the rest of this proof we write $M$ for $M_\sigma(G)$.

Let $B \subseteq V(G)$ be a minimal subset of vertices of $G$ such that they represent all possible neighborhoods in $A$ apart from $\emptyset$.
Formally, $B$ is a minimal subset of $V(G)$ such that $\nz G(A) = \set{N_A(b): b \in B}$.
Obviously $|N_G(A)| \le |B| + 1$ so we need to bound the size of $|B|$.
Let us also remark, that for any distinct $b, b' \in B$ we have $N_A(b) \neq N_A(b')$, as in the opposite case we could remove $b'$ from $B$ obtaining a smaller set with the property we want.
Similarly, for any $b \in B$ we have $N_A(b) \neq \emptyset$.

Let us denote by $N$ the matrix that is obtained from $M$ by removing rows that correspond to vertices in $V(G) \setminus A$ and columns that correspond to vertices in $V(G) \setminus B$.
Obviously, $N$ is an $|A| \times |B|$ $2t+2$-mixed free matrix, as we obtain it by deleting some rows and columns of $M$, which is $2t+2$-mixed free itself.
We can see that any two consecutive columns of $N$ form a mixed matrix.
Indeed, they cannot be equal, so the matrix formed by them is not vertical.
It is not horizontal either, as there is no column in $N$ containing only zeros.
Therefore, for any two consecutive columns of $N$ there is a corner contained in them.
Let us denote by $C$ the corner matrix of $N$ (we assume $|A|, |B| \ge 2$ as the opposite case is trivial).
From what we observed so far, we know that in every column of $C$ there is at least one entry with $1$.

By Lemma \ref{lemma:bounding-number-of-columns} applied for $p = c_{4t+4}$ we obtain that every set of $2^{c_{4t+4}}$ consecutive columns of $C$ has at least $c_{4t+4}$ non-zero rows.
This is because every such set corresponds to a set of $2^{c_{4t+4}} + 1$ consecutive columns in $N$ (and columns of $N$ are pairwise different).

Suppose $C$ has at least $2^{c_{4t+4}} (|A| - 1)$ columns.
Pick any partition of its columns into ${|A| - 1}$ disjoint sets $(\seq B{|A| - 1})$, of at least $2^{c_{4t+4}}$ consecutive columns each.
Consider the $(|A| - 1) \times (|A| - 1)$ matrix $C'$ given by:

\iflipics
\[
C'[i][j] = \begin{cases}
    1 & \text{if there is a column in $B_j$ that has entry~$1$ in $i$'th row,}\\
    0 &\text{otherwise.}
\end{cases}
\]
\else
\[
C'[i][j] = \begin{cases}
    1 & \parbox{5cm}{if there is a column in $B_j$ that has entry~$1$ in $i$'th row,}\\
    0 &\text{otherwise.}
\end{cases}
\]
\fi
The matrix $C'$ has at least $c_{4t+4}$ entries with $1$ in every column.
Therefore, by Theorem \ref{theorem:marcus-tardos}, it admits a $4t+4$-grid minor, which trivially induces a $4t+4$-grid minor of $C$.
This is a contradiction with Lemma \ref{lemma:corner-matrix-grid-free}, which states that $C$ is $4t+4$-grid free.
Therefore $C$ has at most $2^{c_{4t+4}} (|A| - 1) - 1$ columns, so $G$ has at most $2^{c_{4t+4}} (|A| - 1) + 1$ neighborhoods in $|A|$.
Bearing in mind the separate case for $|A| = 1$, we can set $n_t = 2^{c_{4t+4}} = 2^{2^{O(t)}}$.
\end{proof}

\begin{corollary}
\label{cor:vc-density}
Let $\C_t$ be a class of all graphs of twin-width at most $t$.
Then $\vc(\C_t) = 1$.
\end{corollary}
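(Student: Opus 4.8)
The plan is to bound $\vc(\C_t)$ from above and from below by $1$, treating the two inequalities separately. For the upper bound I will feed Theorem~\ref{theorem:neighborhood-complexity-theorem} directly into the definition of the shatter function: for every $G \in \C_t$ and every non-empty $A$ we have $|N_G(A)| \le n_t|A|$, and for $A = \emptyset$ the only neighborhood is $\emptyset$, so $\pi_{\C_t}(k) \le n_t k$ for all $k \ge 1$. Hence $\pi_{\C_t}(k) = O(k) = O(k^1)$, so $1$ belongs to the set $\set{r \in \R : r > 0,\ \pi_{\C_t}(k) = O(k^r)}$, and therefore $\vc(\C_t) \le 1$. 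Along the way I will observe that the same estimate forces $\dim(\C_t) < +\infty$ (a shattered set $A$ would satisfy $2^{|A|} = |N_G(A)| \le n_t|A|$, which bounds $|A|$), so that $\vc(\C_t)$ is indeed computed by the first branch of its definition rather than being $+\infty$.

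For the lower bound I will exhibit, for each $k$, a graph in $\C_t$ together with a $k$-element vertex subset that has $k+1$ distinct neighborhoods in it. The natural candidate is the perfect matching $G$ on the $2k$ vertices $a_1, b_1, \dots, a_k, b_k$ with edge set $\set{a_i b_i : i \in [k]}$. This graph has twin-width $0 \le t$: processing the pairs in the order $a_1, b_1, a_2, b_2, \dots$, each $\set{a_i, b_i}$ is a pair of adjacent twins, so contracting it creates no red edge and yields an isolated vertex; after all $k$ contractions we are left with $k$ isolated vertices, which are pairwise (false) twins and may be merged without ever creating a red edge. Taking $A = \set{a_1, \dots, a_k}$, the vertices $b_1, \dots, b_k$ realize the $k$ distinct singleton neighborhoods $\set{a_i}$, while $a_1$ realizes $\emptyset$, so $|N_G(A)| = k+1$ and thus $\pi_{\C_t}(k) \ge k+1$. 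Since $k+1$ is not $O(k^r)$ for any $r < 1$ (otherwise $(k+1)/k \le C k^{r-1} \to 0$, which is impossible), no $r < 1$ lies in the set above, hence $\vc(\C_t) \ge 1$. Combining the two bounds gives $\vc(\C_t) = 1$.

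I do not anticipate a genuine obstacle here: the only point that deserves explicit verification is that the witness graph really lies in $\C_t$, i.e.\ that a perfect matching has twin-width at most $t$ (in fact $0$), which is settled by the short contraction sequence above. Everything else is a routine unwinding of the definitions of $\pi_{\C_t}$ and of VC-density.
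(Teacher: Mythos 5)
Your proof is correct and uses essentially the same approach as the paper: the upper bound $\vc(\C_t)\le 1$ follows directly from Theorem~\ref{theorem:neighborhood-complexity-theorem}, and the lower bound is witnessed by a perfect matching of twin-width $0$ with one endpoint of each edge placed in $A$, which is exactly the paper's $G_k$, $A_k$ construction. The extra checks you include (finiteness of $\dim(\C_t)$, the explicit contraction sequence, and the $k+1\neq O(k^r)$ calculation) are just spelled-out versions of steps the paper treats as obvious.
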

\begin{proof}
Theorem \ref{theorem:neighborhood-complexity-theorem} trivially implies $\vc(\C_t) \le 1$, so it suffices to show $\vc(\C_t) \ge 1$.
By $G_k$ we denote the graph on vertices $[2k]$ with edges $\set{(2i-1, 2i): i \in [k]}$, i.e. $G_k$ is a sum of $k$ non-incident edges.
Let us also take $A_k = \set{2i: i \in [k]}$.
Obviously $G_k$ has twin-width $0$, so $G_k \in \C_t$ for any $t \ge 0$.
On the other hand, $|N_{G_k}(A_k)| = |A_k| + 1$, which implies $\vc(\C_t) \ge 1$.
\end{proof}

\section{Distality and distal combinatorial tools}
\label{chap:distality}

In this section we concentrate on the notion of \emph{distality}.
The definition of a distal theory was introduced in \cite{distality-definition} using notions from model theory. Later in \cite{distal}, an equivalent definition of a distal model was presented, this time using \emph{abstract cell decompositions}, which are more combinatorial objects.
To our knowledge, only the definitions of a distal theory and a distal model are explicitly present in the literature.
In this section, we also define notions of a distal class of structures and a distal formula.
We remark, that they are just straightforward generalizations of the previous definitions and they already appeared in the literature implicitly (e.g. \cite[Theorem 3.2]{distal}).
Using our definitions we reformulate theorems about powerful combinatorial tools presented in \cite{distal} and \cite{regularity}.
Originally, they were stated for single, infinite distal structures, but we derive analogous statements for the case of classes of (possibly finite) structures.
However, we need to start with additional preliminaries on model theory and introduce the notion of \emph{NIP} (standing for \emph{not the independence property}), which is closely related to distality.

\subsection{Model theory preliminaries}
In the remaining part of this paper we investigate only first-order structures over relational signatures.
In particular we treat a graph $G$ as a structure over the signature consisting only of the binary edge relation symbol $E$.
Similarly, we treat an ordered graph $(G, \preceq)$ as a structure over the signature consisting of two binary symbols -- the edge relation symbol $E$ and the order relation symbol~$\preceq$.

Our notations are standard.
We usually denote first-order structures by blackboard bold letters $\M, \N$, etc., and use letters $M, N$, etc.~to denote their domains.
We denote elements of domains and variables in formulas by small letters $a, x$, etc., and use small letters with a bar $\bar a, \bar x$, etc. to denote tuples of elements or variables.
Then we use $|\bar x|$ to denote the length of the tuple $\bar x = (\seq xn)$.

We sometimes treat a tuple $\bar x$ as a finite set of variables, and then we write $\phi(\bar x)$ to denote a first-order formula $\phi$ with free variables contained in $\bar x$.
We may also write $\phi(\bar x_1,\ldots,\bar x_k)$ to denote a formula whose free variables are contained in $\bar x_1 \cup \ldots \cup \bar x_k$.
We write $x$ instead of $\set x$ in case of a singleton set of variables, e.g. $\phi(x,y)$ always refers to a formula with two free variables $x$ and $y$.
We sometimes write $\phi(\bar x;\bar y)$ (with a semicolon between tuples) to distinguish a partition of the set of free variables of $\phi$ into two parts, $\bar x$ and $\bar y$; this partition plays an implicit role in some definitions.
If we have a set $\Psi$ of formulas with the same free variables $\bar x$ (i.e. $\Psi = \set{\phi_1(\bar x), \phi_2(\bar x), \ldots}$), then we may denote this set by $\Psi(\bar x)$ to emphasize the set of free variables.

If $\M$ is a structure and $\phi(\bar x; \bar y)$ is a formula in the language of $\M$, then for $\bar a \in M^{|\bar y|}$, by $\phi(M; \bar a)$ we denote the subset of $M^{|\bar x|}$ defined by $\phi(\bar x; \bar a)$, namely $\phi(M; \bar a) = \set{\bar b \in M^{|\bar x|} : \M \models \phi(\bar b; \bar a)}$.

For a given signature $\Sigma$ we say that a $\Sigma$-structure $\A$ is a \emph{binary structure}, if $\Sigma$ is a signature consisting only of binary relational symbols.

A \emph{theory} $T$ (over $\Sigma$) is a set of $\Sigma$-sentences.
A \emph{model of} a theory $T$ is a $\Sigma$-structure $\M$ such that $\M \models \phi$ for all $\phi \in T$.
When a theory has a model, it is said to be \emph{consistent}.
The \emph{theory of} a class of $\Sigma$-structures $\C$ is the set of all $\Sigma$-sentences $\phi$ such that $\M \models \phi$ for all $\M \in \C$.
The \emph{elementary closure} $\bar \C$ of $\C$ is the set of all models $\M$ of the theory of $\C$. 
Thus $\C \subset \bar \C$, and $\C$ and $\bar \C$ have equal theories. 

An important tool for checking if a theory is consistent is the compactness theorem.

\begin{theorem}[Compactness]
    \label{thm:compactness}
    A theory $T$ is consistent if and only if every finite subset $T'$ of $T$ is consistent.
\end{theorem}

\subsection{NIP and monadic NIP}
The notion of \emph{NIP} formulas was introduced by Shelah in his work on the classification program \cite{shelah-nip}.
We present here a definition from \cite{nip-definition}, which is equivalent to the original one.
\begin{definition}
    A formula $\phi(\bar x; \bar y)$ with two tuples of free variables $\bar x$ and $\bar y$ is NIP (or dependent) in a class of structures $\C$ if there is a constant $k$, such that for every structure $\M \in \C$ there are no tuples $\bar {a_i} \in M^{|\bar x|} (i \in [k])$, $\bar{b_I} \in M^{|\bar y|} (I \subseteq [k])$ satisfying
    \begin{align}\label{eq:nip_condition}
        \M \models \phi(\bar{a_i}; \bar{b_I}) \iff i \in I. \tag{$\ast$}
    \end{align}
\end{definition}

Moreover, we say that a class of structures $\C$ is NIP if every formula $\phi(\bar x; \bar y)$ is NIP in $\C$ and we say that a single structure $\M$ is NIP if the class $\set{\M}$ is.

Observe, that for a formula $\phi(\bar x; \bar y)$ and a constant $k$ we can write a first-order sentence $\ass(\phi(\bar x; \bar y), k)$ which states, that in a given structure $\M$ there are no tuples $\bar {a_i} \in M^{|\bar x|} (i \in [k])$, $\bar{b_I} \in M^{|\bar y|} (I \subseteq [k])$ satisfying condition (\ref{eq:nip_condition}), namely
\iflipics
\[
    \ass(\phi(\bar x; \bar y), k) \equiv \neg \left[ \exists_{\bar x_1} \ldots \exists_{\bar x_k}\exists_{\bar y_\emptyset} \ldots \exists_{\bar y_{[k]}} \\ \left( \bigwedge_{i \in I} \phi(\bar x_i; \bar y_I) \land \bigwedge_{i \not \in I} \neg \phi(\bar x_i; \bar y_I) \right) \right].
\]
\else
\begin{dmath*}
    \ass(\phi(\bar x; \bar y), k) \equiv \neg \left[ \exists_{\bar x_1} \ldots \exists_{\bar x_k}\exists_{\bar y_\emptyset} \ldots \exists_{\bar y_{[k]}} \\ \left( \bigwedge_{i \in I} \phi(\bar x_i; \bar y_I) \land \bigwedge_{i \not \in I} \neg \phi(\bar x_i; \bar y_I) \right) \right].
\end{dmath*}
\fi
Therefore, if $\phi(\bar x; \bar y)$ is NIP in a class of structures $\C$, then for some $k$ the formula $\ass(\phi(\bar x; \bar y), k)$ is true in all structures in $\C$. 
It follows that $\phi(\bar x; \bar y)$ is NIP in $\C$ if and only if it is NIP in its elementary closure $\bar \C$.
Moreover, $\C$ is an NIP class of structures if and only if $\bar \C$ is.

We remark that the definition of NIP formulas can be equivalently expressed in terms of VC-dimension.
Indeed, it is easy to see that a formula $\phi(\bar x; \bar y)$ is NIP in a class of structures $\C$ if and only if there is an integer $k$ such that for every $\M \in \C$ the family of sets $\set{\phi(M; \bar a) : \bar a \in M^{|y|}}$ has VC-dimension at most $k$.
This gives an equivalent definition of NIP using the notion of VC-dimension.

Let us also note at this point that a class of graphs of bounded twin-width together with compatible orders is NIP.
We state it formally in the following lemma.
\begin{lemma}\label{lemma:nip-twin-width}
    Let $\widehat{\C}$ be a class of ordered graphs with the following property: there is a constant $t \in \N$ depending only on $\widehat{\C}$ such that for every $(G, \preceq) \in \widehat{\C}$ the matrix $M_\preceq(G)$ is $t$-mixed free.
    Then, $\widehat{\C}$ is NIP.
\end{lemma}

We defer the proof of Lemma \ref{lemma:nip-twin-width} to 
\iflipics
Appendix~\ref{appendix:twin-width-mon-nip}.
\else
the full version of this paper.
\fi
The reason is that although the proof is a simple combination of \cite[Theorem 14]{twinwidth1}, \cite[Theorem 11]{twinwidth4}, and \cite[Theorem 3]{twinwidth4}, it requires introducing a number of new definitions which are irrelevant for the rest of this paper.

\subsection{Abstract cell decompositions and definitions of distality}
\label{subsec:abstract_cell_decompositions}
In this subsection we define the notion of an \emph{abstract cell decomposition} for a given formula in a given structure by following \cite{distal}.
This requires a number of new definitions -- in order to make it easier to understand them, we interleave new definitions with examples.
Finally, we use abstract cell decompositions for defining the notion of distality.

Let $\M$ be a fixed $\Sigma$-structure.
For sets $A, X \subseteq M^d$ we say that $A$ \emph{crosses} $X$ if both $X \cap A$ and $X \cap (M^d \setminus A)$ are nonempty.

For a formula $\phi(\bar x; \bar y)$ and a set $S \subseteq M^{|\bar y|}$ we say that a subset $A \subseteq M^{|\bar x|}$ is $\phi(\bar x; S)$-\emph{definable} if $A = \phi(M; \bar s)$ for some $\bar s \in S$.
For a set $A \in M^{|\bar x|}$ we say that $\phi(\bar x; S)$ crosses $A$ if $\phi(M; \bar s)$ crosses $A$ for some $\bar s \in S$.

\begin{example}\label{ex:crosses}
    Consider the structure $\M$ with domain $M = \set{v_1, v_2, \ldots}$ over the empty signature $\Sigma = \emptyset$ and a formula $\phi(x; y) \equiv x = y$.
    Then, for $S_k = \set{v_1, \ldots, v_k}$ only $\set{v_1}, \ldots, \set{v_k}$ are $\phi(x; S)$-definable.
    Moreover, $\phi(x; S_k)$ crosses $\set{v_1, v_2}$ (because $\set{v_1}$ does), but it does not cross $M \setminus S_k$.
    \qed
\end{example}

Given a finite set $S \subseteq M^{|\bar y|}$, a finite family $\F$ of subsets of $M^{|\bar x|}$ is called an \emph{abstract cell decomposition} for $\phi(\bar x, S)$ if $M^{|\bar x|} = \bigcup \F$ and every $\Delta \in \F$ is not crossed by $\phi(\bar x; S)$.
Observe, that we do not require the sets in $\F$ to be disjoint -- we just want them to cover the whole of $M^{|\bar x|}$.
An abstract cell decomposition for $\phi(\bar x; \bar y)$ is an assignment $\T$ that to each finite set $S \subseteq M^{|\bar y|}$ assigns an abstract cell decomposition $\T(S)$ for $\phi(\bar x; S)$.

Observe that every $\phi(\bar x; \bar y)$ admits an obvious abstract cell decomposition, with $\T(S)$ consisting of the atoms in the Boolean algebra generated by the $\phi(\bar x; S)$-definable sets (see Figure \ref{fig:atomic-acl}).
In general, defining these cells would require longer and longer formulas when $S$ grows, and we want to avoid this possibility.
Therefore, we say that an abstract cell decomposition $\T$ for $\phi(\bar x; \bar y)$ is \emph{weakly definable} if there is a finite set of formulas $\Psi(\bar x; \bar {y_1}, \ldots, \bar {y_k})$ with $|\bar y| = |\bar {y_1}| = \ldots = |\bar {y_k}|$ such that for any finite $S \subseteq M^{|\bar y|}$, every $\Delta \in \T(S)$ is $\Psi(\bar x; S^k)$-definable (i.e. $\Delta = \psi(M; \bar {s_1}, \ldots, \bar {s_k})$ for some $\bar {s_1}, \ldots, \bar {s_k} \in S$
and $\psi \in \Psi$).
In this case we say that $\Psi(\bar x; \bar {y_1}, \ldots, \bar {y_k})$ weakly defines $\T$.

\begin{figure}
    \centering
    \iflipics
    \includegraphics[scale=1.0]{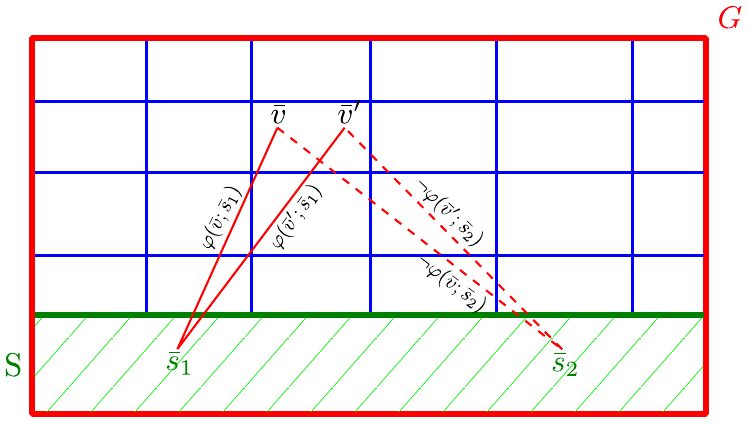}
    \else
    \includegraphics[scale=0.7]{images/atomic_acl.pdf}
    \fi
    \caption{For a given finite set $S \subseteq M^{|y|}$ we can divide $M^{|x|}$ into at most $2^{|S|}$ parts such that tuples in every part satisfy $\phi$ with the same elements of $S$. That gives us an abstract cell decomposition for $\phi(\bar x; S)$. However, this decomposition does not need to be weakly definable.}
    \label{fig:atomic-acl}
\end{figure}

\begin{example}\label{ex:no-definition}
    Consider the structure $\M$ and the formula $\phi(x; y)$ from Example \ref{ex:crosses}.
    The obvious abstract cell decomposition for $\phi(x; y)$ mentioned above is the assignment $\T$ defined for every finite $S \subseteq M$ as follows
    \[
        \T(S) = \set{\set{v} : v \in S} \cup \set{M \setminus S}.
    \]
    However, $\phi(x; y)$ does not admit a weakly definable abstract cell decomposition in $\M$.
    Indeed, assume by contradiction that $\phi(x; y)$ admits an abstract cell decomposition $\T$ in $\M$ weakly definable by $\Psi(x; y_1, \ldots, y_k)$.
    Take a set $S$ of size $k + 1$ and an element $v \in M \setminus S$.
    By the definition of $\T$ there is a set $F \in \T(S)$ such that $v \in F$.
    This set $F$ is definable by some $\psi \in \Psi$, so $F = \psi(M; s_1, \ldots, s_k)$ for some $s_1, \ldots, s_k \in S$.
    Therefore, there is an element $s \in S \setminus \set{s_1, \ldots, s_k}$.
    As $\psi$ is a formula over the empty signature, it cannot distinguish $v$ from $s$, so $s \in F$.
    However, $\set{s}$ crosses $F$, which is a contradiction.
    \qed
\end{example}

\begin{example}\label{ex:definition}
    We observed in Example \ref{ex:no-definition} that the formula which is just an equality does not admit a weakly definable abstract cell decomposition in an infinite structure over the empty signature.
    We can overcome this problem by adding a total order to our structure.
    Formally, consider the structure $\M_\preceq$ over the signature $\Sigma = \set \preceq$ with the domain $M = \set{v_1, v_2, \ldots}$ and assume $v_i \preceq v_j \iff i \le j$.
    Consider an abstract cell decomposition $\T$ defined for any $S = \set{v_{i_1}, \ldots, v_{i_k}}$ with $i_1 < \ldots < i_k$ as follows
    \begin{equation*}
        \begin{split}
            \T(S) = & \set{\set{v} : v \in S} \\
            \cup & \set{\set{v : v_{i_j} \prec v \prec v_{i_{j + 1}}} : j \in [k - 1]} \\
            \cup & \set{\set{v : v \prec v_{i_1}}} \\
            \cup & \set{\set{v : v \succ v_{i_k}}}.
        \end{split}
    \end{equation*}
    Observe that every $F \in \T(S)$ can be defined by one of these four formulas:
    \begin{itemize}
        \item $\psi_1(x; y_1) \equiv x = y_1$;
        \item $\psi_2(x; y_1, y_2) \equiv y_1 \preceq x \land x \preceq y_2 \land x \neq y_1 \land x \neq y_2$;
        \item $\psi_3(x, y_1) \equiv x \preceq y_1 \land x \neq y_1$;
        \item $\psi_4(x; y_1) \equiv y_1 \preceq x \land x \neq y_1$.
    \end{itemize}
    Formally, these formulas have a different number of parameters, but we can extend $\psi_1, \psi_3$ and $\psi_4$ with one artificial parameter which is not further used and in this way we obtain that $\T$ is weakly definable with $\Psi(x; y_1, y_2)$ for $\Psi = \set{\psi_1, \psi_2, \psi_3, \psi_4}$.
    \qed
\end{example}

If we have an abstract cell decomposition $\T$ which is weakly definable by $\Psi(\bar x, \bar {y_1}, \ldots, \bar {y_k})$ we know that for every $F \in \T(S)$ there is a choice of a formula $\psi \in \Psi$ and $\bar{s_1}, \ldots, \bar{s_k} \in S$ such that $F = \psi(\M, \bar{s_1}, \ldots, \bar{s_k})$.
However, we don't know \emph{a priori} which choices of $\psi \in \Psi$ and parameters from $S$ lead to a set in $\T(S)$.
Therefore, we say that an abstract cell decomposition $\T$ for $\phi(\bar x; \bar y)$ is \emph{definable} if it is weakly definable by some $\Psi(\bar x; \bar {y_1}, \ldots, \bar {y_k})$ and if for every $S \subset M^{|\bar y|}$ and each $\Psi(\bar x; S^k)$-definable $\Delta \subseteq M^{|\bar x|}$ there is a set $\I(\Delta) \subseteq M^{|\bar y|}$, \emph{uniformly definable} in $\Delta$, such that
\[
    \T(S) = \set{\Delta \in \Psi(\bar x; S) : \I(\Delta) \cap S = \emptyset}.
\]
By the uniform definability of $\I(\Delta)$ we mean that for every $\psi(\bar x; \bar {y_1}, \ldots, \bar {y_k}) \in \Psi(\bar x; \bar {y_1}, \ldots, \bar {y_k})$, there is a formula $\theta_\psi(\bar y; \bar {y_1}, \ldots, \bar {y_k})$ such that for any $\bar {s_1}, \ldots, \bar {s_k} \in M^{|\bar y|}$ if $\Delta = \psi(M; \bar {s_1}, \ldots, \bar {s_k})$ then $\I(\Delta) = \theta_\psi(M; \bar {s_1}, \ldots, \bar {s_k})$.

Observe, that every abstract cell decomposition $\T$ weakly definable with $\Psi(\bar x; \bar {y_1}, \ldots, \bar {y_k})$ induces the definable abstract cell decomposition $\T^{\max}$ with
\[
    \I(\Delta) = \set{\bar s \in M^{|\bar y|} : \phi(\bar x; \bar s) \text{ crosses } \Delta},
\]
i.e. for every $\psi \in \Psi$ we take
\begin{dmath*}
    \theta_\psi(\bar y; \bar {y_1}, \ldots, \bar {y_k}) \equiv \left( \exists \bar x. \psi(\bar x; \bar {y_1}, \ldots, \bar {y_k}) \land \phi(\bar x; \bar y) \right) \land \left( \exists \bar x. \psi(\bar x; \bar {y_1}, \ldots, \bar {y_k}) \land \neg \phi(\bar x; \bar y) \right).
\end{dmath*}
Moreover, for any finite $S \in M^{|\bar x|}$ we have $|\T^{\max}(S)| \le |\Psi| \cdot |S|^k = O(|S|^k)$.

\begin{example}
    \label{ex:definable}
    In Example \ref{ex:definition} we gave an example of the abstract cell decomposition $\T$ weakly definable by $\Psi(x, y_1, y_2)$ for the formula $x = y$ in the model $\M_\preceq$.
    Consider now a set $S = \set{v_{1}, v_{3}, v_{5}}$.
    In $\T(S)$ there is a set $\set{v : v_{1} \prec v \prec v_{{3}}} = \set{v_2}$, which is defined by the formula $\psi_2(x; y_1, y_2) \equiv y_1 \preceq x \land x \preceq y_2 \land x \neq y_1 \land x \neq y_2$, where we plug $y_1 = v_1$ and $y_2 = v_3$.
    However, if we plug $y_1 = v_1$ and $y_2 = v_5$ we get the set $\Delta = \set{v : v_{1} \prec v \prec v_{{4}}} = \set{v_2, v_3, v_4} \not \in T(S)$.
    
    Consider the formula $\theta_{\psi_2}$ given above for defining $\T^{\max}$.
    In this particular case it says:
    \[
        \theta_{\psi_2}(y, y_1, y_2) \equiv \psi_2(y, y_1, y_2) \land (\exists x. x \neq y \land \psi_2(x, y_1, y_2)). 
    \]
    Observe that we have $\M_\preceq \models \theta_{\psi_2}(v_3, v_1, v_5)$, so we have $v_3 \in \I(\Delta)$.
    In this way we know that the set $\Delta$ shouldn't be used in $\T(S)$.
    It is easy to observe, that for the considered abstract cell decomposition $\T$, we have that $\T^{\max}(S) = O(|S|)$ for the induced $\T^{\max}$ definable abstract cell decomposition.
    \qed
\end{example}

Using the notion of a weakly definable abstract cell decomposition we can define the notion of distality.
We define it for single formulas and classes of structures similarly to the notion of NIP.
The following definition is an adaptation of \cite[Fact 2.9]{distal}.

\begin{definition}
    A formula $\phi(\bar x; \bar y)$ with two tuples of parameters $\bar x$ and $\bar y$ is distal in a class of structures $\C$ if there is a finite set of sentences $\Psi(\bar x; \bar y_1, \ldots, \bar y_k)$ such that for any $\M \in \C$ there is an abstract cell decomposition for $\phi(\bar x; \bar y)$ weakly definable by $\Psi$.
\end{definition}

In other words we say that a formula $\phi(\bar x; \bar y)$ is distal in $\C$ if in every $\M \in \C$ it admits an abstract cell decomposition weakly definable by the same set of formulas.
As we observed that every weakly definable abstract cell decomposition $\T$ induces a definable abstract cell decomposition $\T^{\max}$, we can assume that the decomposition in the definition of a distal formula is definable.
Moreover, we say that a class of structures $\C$ is distal if every formula $\phi(\bar x; \bar y)$ is distal in $\C$ and we say that a single structure $\M$ is distal if the class $\set{\M}$ is.

Observe, that similarly to the case of NIP classes of structures, if a formula $\psi(\bar x; \bar y)$ admits a distal cell decomposition weakly definable by a finite set of sentences $\Psi$ in $\C$, then it also admits a distal cell decomposition weakly definable by a finite set of sentences $\Psi$ in the elementary closure $\bar \C$.

\subsection{Distal combinatorial tools}
We start this section with the original statement of Distal cutting lemma.

\begin{theorem}[{Distal cutting lemma, {\cite[Theorem 3.2]{distal}}}]
    \label{theorem:cutting_lemma}
    Let $\M$ be a fixed structure and $\phi(\bar x; \bar y)$ be a formula admitting a definable distal cell decomposition $\T$ (weakly definable by a finite set of formulas $\Psi(\bar y; \bar y_1, \ldots, \bar y_k)$) with $\T(S) = O(|S|^d)$.
    Then for any $A \subseteq M^{|y|}$ of size $n$ and any real $r$ satisfying $1 \le r \le n$ we can partition $M^{|x|}$ into sets $\seq Xl$ with
    \[
        l \le Cr^{d},
    \]
    and satisfying the following: for every $X_i$ there are at most $\frac nr$ tuples $\bar a \in A$ such that $\phi(M; \bar a)$ crosses $X_i$.
    The constant $C$ depends only on the model $\M$ and the formula $\phi$.
    Moreover, each of the $X_i$'s is an intersection of at most two $\Psi(x; A^k)$-definable sets.
\end{theorem}

To better understand the statement of Theorem \ref{theorem:cutting_lemma} consider it in the case when $\M$ is a graph (possibly ordered and infinite) and $\phi(x; y) \equiv E(x, y)$. 
Then Theorem \ref{theorem:cutting_lemma} says, that for any set of vertices $A$ we can partition all the vertices into at most $Cr^d$ sets $\seq Xl$ and in every $X_i$ there are at most $\frac nr$ vertices $a \in A$ such that $a$ has both a neighbor and a non-neighbor in $X_i$, so in other words $X_i$ is homogenous towards all but $\frac nr$ vertices from $A$.

By using compactness in a standard way, we get a version of Theorem \ref{theorem:cutting_lemma} for classes of structures.

\begin{theorem}[Distal cutting lemma for classes of structures]
    \label{theorem:finitary_cutting_lemma}
    Let $\C$ be a fixed class of structures and $\phi(\bar x; \bar y)$ be a formula admitting for every $\M \in \C$ a definable distal cell decomposition $\T_\M$ (weakly definable by the same finite set of formulas $\Psi(\bar y; \bar y_1, \ldots, \bar y_k)$) with $\T_\M(S) = O(|S|^d)$.
    Then for any $\M \in \C$, any $A \subseteq M^{|y|}$ of size $n$, and any real $r$ satisfying $1 \le r \le n$ we can partition $M^{|x|}$ into sets $\seq Xl$ with
    \[
        l \le Cr^{d},
    \]
    such that for every $X_i$ there are at most $\frac nr$ tuples $\bar a \in A$ such that $\phi(M; \bar a)$ crosses $X_i$.
    The constant $C$ depends only on the class of structures $\C$ and the formula $\phi$.
    Moreover, each of the $X_i$'s is an intersection of at most two $\Psi(x; A^k)$-definable sets.
\end{theorem}
\begin{proof}
    Assume by contradiction that the statement is not true for some real $r$.
    Therefore, for every $L \in \N$ there is some $n_L \in \N$ and a structure $\M_L \in \C$ together with $A_L \subseteq M_L^{|y|}$ of cardinality at most $n_L$ such that we cannot choose at most $Lr^d$ sets which are intersection of at most two $\Psi(x; A_L^k)$-definable sets and form a desired partition.
    Without loss of generality, we can choose constants $n_L$ so that they form a non-decreasing sequence.
    
    Observe, that for every $L$ we can write a sentence $\xi_L$ of first-order logic stipulating:
    \begin{center}
        "there is a set $A_L^k$ of at most $n_L$ tuples of length $|y|$ (formally we quantify existentially $n_L$ times over tuples of size $|y|$) with the following property: for every choice of $Lr^d$ sets which are intersection of at most two $\Psi(\bar x; A_L^k)$-definable sets, if they cover the whole domain, then at least one of them is crossed by more than $\frac {|A_L^k|}r$ $\phi(\bar x; A_L^k)$-definable sets". 
    \end{center}
    
    Consider the theory $T$ of the class $\C$.
    Extend it by adding all the sentences $\xi_L$ to obtain a new theory $T'$.
    By compactness, $T'$ is consistent.
    Indeed, any finite subset of $T'$ is consistent, as there is a model in $\C$ which satisfies sentences $\xi_L$ for arbitrary large $L$'s.
    
    As $T'$ is consistent, it has a model $\M'$.
    As $T \subseteq T'$, the model $\M'$ satisfies the assumptions of Theorem \ref{theorem:cutting_lemma}.
    However, as $\xi_L \in T'$ for every $L \in \N$, then $\M'$ does not satisfy its conclusion, which gives us a contradiction.
\end{proof}

A similar approach can be applied to another distal combinatorial tool - Distal regularity lemma \cite[Theorem 5.8]{regularity}.
However, its full statement is more complicated and uses a number of new definitions.
Moreover, it is stated for distal models, but a careful analysis of the proof shows that it can be applied also on the level of distal formulas provided that the model we are working with is NIP.
Therefore, we decided to present here a simplified statement of this theorem in the setting of (ordered) graphs.

\begin{theorem}[Simplified version of Distal regularity lemma]
    \label{theorem:simple_regularity_lemma}
    Let $\C$ be an NIP class of (ordered) graphs and assume that the formula $\phi(x; y) \equiv E(x, y)$ is distal in $\C$. 
    Then, there is a constant $c = c(\C)$ such that for any $\epsilon > 0$ and for any graph $G \in \C$ there is a partition $\seq Pl$ of vertices of $G$ for some $l \le c(\frac 1\epsilon)^{c}$, such that
    \[
        \sum |P_i||P_j| \le \epsilon |V(G)|^2
    \]
    where the sum is over all $i, j \in [|V(G)|]$ such that $P_i$ and $P_j$ are not homogenous.
\end{theorem}

A sketch of the proof of Theorem \ref{theorem:simple_regularity_lemma} is given in 
\iflipics
Appendix \ref{appendix:proof-regularity-lemma}.
\else
the full version of this paper.
\fi
As we mentioned, this is just repeating parts of the argument from \cite{regularity}, but in the simplified setting of graphs.

\section{Abstract cell decomposition for the edge relation in graphs of bounded twin-width}
\label{chap:encoding}

Classes of ordered graphs of bounded twin-width are known to be distal.
The proof of this fact is a combination of a few involved results from model theory.
Without going into details, if $\widehat{\C}$ is a class of totally ordered graphs of bounded twin-width, then by \cite[Theorem 1 and Theorem 3]{twinwidth4} it is monadically NIP (this is a property of models much stronger than being NIP).
Then by \cite[Theorem 1.1]{bl21} we get that since $\widehat{\C}$ is monadically NIP then it is dp-minimal (this is another property of models which we don't define in this paper), and finally by \cite[Lemma 2.10]{distality-definition} we get that as $\widehat{\C}$ is dp-minimal and totally ordered, then it is distal.
However, this proof does not give any explicit abstract cell decomposition for the class $\C$, which means that e.g. we don't have any bound on the parameter $d$ in Theorem \ref{theorem:finitary_cutting_lemma}.

The aim of this section and the main result of this paper is a direct combinatorial proof of the result sketched above for the case of the edge relation.
Formally, we prove the following theorem.

\begin{theorem}
    \label{theorem:encoding_theorem}
    Let $\widehat{\C}$ be a class of ordered graphs with the following property: there is a constant $t \in \N$ depending only on $\widehat{\C}$ such that for every $(G, \preceq) \in \widehat{\C}$ the matrix $M_\preceq(G)$ is $t$-mixed free.
    Then the formula $\phi(x; y) \equiv E(x, y)$ admits in $\widehat{C}$ a distal cell decomposition weakly definable by a finite set $\Psi(x; y_1, \ldots, y_k)$, where $k = \mathcal O(t)$.
\end{theorem}

\subsection{Proof of Theorem \ref{theorem:encoding_theorem}}

We start by introducing a crucial definition for this part.

\begin{definition}
Let $M$ be an $m \times n$ $0-1$ matrix.
We define the \textbf{corner-profile} of column $j$ (with $1 \le j < n$) as the subset $\prof(j) \subseteq [m - 1]$ such that $i \in \prof(j)$ if and only if $M[i:(i+1)][j:(j+1)]$ is a corner.
For a subset of columns $J \subseteq [n - 1]$ we define $\prof(J) = \bigcup_{j \in J} \prof(j)$.
For an ordered graph $(G, \preceq)$ we identify columns of $M_\preceq(G)$ with vertices of $G$ and therefore assume that the domain of $\prof$ is $V(G)$ and it ranges over subsets of $V(G)$.
\end{definition}

An example of a corner profile of a column is presented in Figure \ref{fig:corner-profile}.

\begin{figure}
    \centering
    \includegraphics[scale=0.5,page=3]{images/figures.pdf}
    \caption{In this example we have $\prof(2) = \{1, 4, 5\}$.}
    \label{fig:corner-profile}
\end{figure}

Before we proceed with the proof of Theorem \ref{theorem:encoding_theorem}, it is insightful to prove a simpler proposition, which shows some intuitions that will be useful later on.
Namely, we prove that vertices with a \textit{large} corner-profile can be efficiently defined using a few parameters.

We use additional notation for an ordered graph ${(G, \preceq)}$.
A vertex $v'$ is the successor of vertex $v$ in $\preceq$ (denoted by $s(v)$) if $v \neq v', v \preceq v'$ and for all $u \in V(G)$ we have either $u \preceq v $ or $v' \preceq u$.
Two vertices $v, v' \in V(G)$ are adjacent in $\preceq$ if $v' = s(v)$ or $v = s(v')$.
For a vertex $v$ we also denote by $S^k(v)$ the set of $k$ consecutive vertices after $v$ together with $v$, i.e. $S^k(v) = \set{s^l(v) : 0 \le l \le k}$.

\begin{proposition}
\label{prop:one_vertex_large_profile}
Let $(G, \preceq)$ be a $t$-mixed free ordered graph and let $v \in V(G)$ satisfy $|\prof(v)| \ge 2t$.
There are a constant $f_t$ depending only on $t$, a sequence of vertices $(\seq at) \in \prof(v)^t$, and a first-order formula in the language of ordered graphs $\phi(x; \seq yt)$ such that the length of $\phi$ is bounded by $f_t$ and $\phi(G; a_1, \ldots, a_t) = \set{v}$.
\end{proposition}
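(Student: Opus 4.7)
The plan is to identify $v$ as the unique vertex satisfying a first-order condition determined by its rank among a bounded set of ``imitators.'' Concretely, I will (i) pick $t$ parameters $a_1, \ldots, a_t \in \prof(v)$ whose row-pairs $(a_i, s(a_i))$ are pairwise disjoint in $\preceq$, (ii) use $t$-mixed freeness to bound the number of vertices $x$ whose corner-profile contains all of $a_1, \ldots, a_t$, and (iii) pin down $v$ inside that bounded set by first-order counting against $\preceq$.

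For (i) I would enumerate $\prof(v)$ in $\preceq$-order as $b_1 \prec b_2 \prec \cdots$ and set $a_i := b_{2i-1}$ for $i \in [t]$, which is possible since $|\prof(v)| \ge 2t$. Because $b_{2i}$ lies strictly between $b_{2i-1}$ and $b_{2i+1}$, we have $s(a_i) \preceq b_{2i} \prec a_{i+1}$, so the row-pairs $(a_i, s(a_i))$ are pairwise disjoint and appear in $\preceq$-order. For (ii) I would let
\[
    S := \bigl\{\, x \in V(G) : \{a_1, \ldots, a_t\} \subseteq \prof(x)\,\bigr\}
\]
and claim $|S| \le 2t - 1$. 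If not, enumerating $S$ in $\preceq$-order as $x_1 \prec x_2 \prec \cdots$ and picking $x_1, x_3, \ldots, x_{2t-1}$ yields $t$ vertices with pairwise disjoint column-pairs by the same every-other argument; by definition of $S$, for every $i, j \in [t]$ the $2 \times 2$ submatrix of $M_\preceq(G)$ at rows $(a_i, s(a_i))$ and columns $(x_{2j-1}, s(x_{2j-1}))$ is a corner. Extending these disjoint row- and column-pairs to a $(t, t)$-division of $M_\preceq(G)$ by placing exactly one chosen pair into each consecutive block, every zone would contain a corner and thus be mixed, yielding a $t$-mixed minor --- contradicting the hypothesis.

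For (iii), let $\psi(x, y_1, \ldots, y_t)$ be the first-order formula stating that each $y_i$ lies in $\prof(x)$; this unfolds to a conjunction over $i$ of the clause ``$y_i$ and $x$ both have successors and the $2 \times 2$ submatrix at rows $(y_i, s(y_i))$ and columns $(x, s(x))$ is mixed,'' which is first-order in the language with $\preceq$ and the edge relation. Since $v \in S$ and $|S| \le 2t - 1$, there is a unique $k \in \{1, \ldots, 2t - 1\}$ such that $v$ is the $k$-th smallest element of $S$ under $\preceq$. Letting $\eta_k(x, \bar y)$ be the first-order formula asserting that exactly $k - 1$ distinct vertices $z \prec x$ satisfy $\psi(z, \bar y)$, I would set $\phi_k(x, \bar y) := \psi(x, \bar y) \wedge \eta_k(x, \bar y)$; then $\phi_k(x, a_1, \ldots, a_t)$ will hold iff $x = v$. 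Since $k$ ranges over a finite set and each $\phi_k$ has length bounded in terms of $t$ alone, taking $f_t$ to be the maximum such length yields the claim.

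The hard part will be the bound $|S| \le 2t - 1$ in step (ii): one must verify that the corner configuration at the intersections of the disjoint row- and column-pairs genuinely extends to a legitimate $(t, t)$-division of $M_\preceq(G)$, which is why the disjointness arranged in step (i) is essential. Boundary concerns about non-existent successors are absorbed by the fact that $\prof(u) \neq \emptyset$ already forces $u$ to have a successor, so $v$ and every element of $S$ come automatically equipped with $s(\cdot)$.
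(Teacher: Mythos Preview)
Your proof is correct and follows essentially the same approach as the paper's: select $t$ pairwise non-adjacent elements of $\prof(v)$, bound by $2t-1$ the set $S$ of vertices whose profile contains all of them via the $t$-mixed minor argument, and identify $v$ by its $\preceq$-rank within $S$. Your write-up is somewhat more explicit than the paper's about how to realise the non-adjacency (taking every other element in $\preceq$-order) and about the first-order encoding, but the structure and the key ideas coincide.
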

\begin{proof}
As $|\prof(v)| \ge 2t$ we can pick a subset $P \subseteq \prof(v)$ of size $t$ such that every two vertices $w, w' \in P$ are not adjacent in $\preceq$.
Let us denote $S = \{u \in V(G): P \subseteq \prof(u)\}$.
Obviously $v \in S$ and we can easily see that $|S| < 2t$.
Indeed, if $|S| \ge 2t$, then we could pick $S' \subseteq S$ of size $t$ such that every two vertices in $S'$ are not adjacent in $\preceq$.
Then pairs of rows $\set{(w, s(w)) : w \in P}$ and pairs of columns $\set{(u, s(u)) : u \in S'}$ would induce a $t$-mixed minor of $M_\preceq(G)$.
That is a contradiction with the assumption that $(G, \preceq)$ is $t$-mixed free.
Therefore, we take $\phi$ which just says that $v$ is the $l$'th vertex in order $\preceq$ such that $P \subseteq \prof(v)$.
This can be done within the constraints in the statement, as we can use vertices in $P$ as $a_1, \ldots, a_t$ and we can define the successor of a given vertex using $\preceq$.
Moreover, the size of $\phi$ depends only on $t$ (as we discussed that $l$ is smaller than $2t$), so it can be bounded by some constant~$f_t$.
\end{proof}

The assumption that one vertex has a large corner-profile is quite strong, in the sense that we can easily imagine a matrix which admits a large mixed minor, but its every column has a small corner-profile (e.g. of a constant size).
However, Proposition \ref{prop:one_vertex_large_profile} can be leveraged to the case when a set of consecutive vertices has a large corner-profile, which is formalized in Proposition \ref{prop:l_vertex_large_profile}.

\begin{proposition}
\label{prop:l_vertex_large_profile}
Let $(G, \preceq)$ be a $t$-mixed free ordered graph and let $v \in V(G)$ satisfy $|\prof(S^l(v))| \ge 2t$ for some integer $l$.
There are a constant $f_t$ depending only on $t$, a sequence of vertices $(\seq at) \in \prof(S^l(v))^t$, a first-order formula in the language of ordered graphs $\phi(x; \seq yt)$, and a vertex $v' \in S^l(v)$ such that the length of $\phi$ is bounded by $f_t$ and $\phi(G; a_1, \ldots, a_t) = \set{v'}$.
\end{proposition}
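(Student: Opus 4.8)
The plan is to reduce Proposition~\ref{prop:l_vertex_large_profile} to Proposition~\ref{prop:one_vertex_large_profile} by identifying, among the $l+1$ vertices of $S^l(v)$, a single vertex $v'$ whose \emph{own} corner-profile is already large enough to apply the single-vertex statement, and then using the fact that $l$ is itself controlled by $t$ so that naming $v'$ relative to $v$ costs nothing in formula length. First I would observe that $\prof(S^l(v)) = \bigcup_{u \in S^l(v)} \prof(u)$ by definition of $\prof$ on a set of columns; since this union has size at least $2t$ and is taken over $l+1$ sets, at least one $u \in S^l(v)$ has $|\prof(u)| \ge \frac{2t}{l+1}$. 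This is only useful if $l$ is bounded in terms of $t$, so the real first step is to argue that without loss of generality $l < 2t$ (or some similar bound): if $l$ were large, then $S^l(v)$ would contain many pairwise non-$\preceq$-adjacent columns, and combined with a single pair of rows witnessing any one corner we would already be building toward a mixed minor — more carefully, one can first extract from $S^l(v)$ a sub-selection of at most $2t$ consecutive-enough vertices still carrying corner-profile $\ge 2t$ in their union, or simply note that we may replace $l$ by $\min(l, 2t^2)$ since a union of $2t^2$ of the sets $\prof(u)$ already must reach size $2t$ once one does the pigeonhole the other way. Let me instead take the cleanest route: shrink $l$ so that $l+1 \le$ some function of $t$; then pigeonhole gives a single $v' \in S^l(v)$ with $|\prof(v')| \ge 2t / (l+1) \ge$ a constant — but this constant may be below $2t$, which breaks the hypothesis of Proposition~\ref{prop:one_vertex_large_profile}.

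To repair that, the honest approach is not to hand one whole $u$ to the previous proposition but to redo its argument for a set $P$ drawn from $\prof(S^l(v))$ directly. So the key steps I would carry out are: (1) pick $P \subseteq \prof(S^l(v))$ of size $t$ with all elements pairwise non-$\preceq$-adjacent, exactly as in the proof of Proposition~\ref{prop:one_vertex_large_profile}, which is possible since $|\prof(S^l(v))| \ge 2t$; (2) set $S = \{x \in V(G) : P \subseteq \prof(x)\}$ and prove $|S| < 2t$ by the identical mixed-minor contradiction — if $|S| \ge 2t$, pick $S' \subseteq S$ of size $t$ pairwise non-$\preceq$-adjacent, and the row-pairs from $P$ together with the column-pairs from $S'$ give a $t$-mixed minor of $M_\preceq(G)$, contradicting $t$-mixed freeness; (3) here is the subtlety: unlike before, we do \emph{not} know $S^l(v) \subseteq S$, only that $\prof(S^l(v)) \supseteq P$, i.e.\ \emph{some} vertex of $S^l(v)$ lies in $S$ for each $i \in P$ — actually we need at least one single vertex $v' \in S^l(v)$ with $P \subseteq \prof(v')$, which need not exist if different elements of $P$ are witnessed by different members of $S^l(v)$.

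This is the main obstacle, and it forces a more careful choice of $P$. The fix: instead of choosing $P$ first and hoping a single vertex realizes all of it, I would choose the vertex first. Because $\prof(S^l(v)) = \bigcup_{i=0}^{l} \prof(s^i(v))$ and $l$ is bounded (after the reduction above) by a constant $L = L(t)$, the largest of the $l+1$ sets $\prof(s^i(v))$ has size $\ge 2t/(L+1)$; if this is still $\ge 2t$ we are done via Proposition~\ref{prop:one_vertex_large_profile} applied to that $v' = s^i(v)$, taking $a_1,\dots,a_t$ from $\prof(v') \subseteq \prof(S^l(v))$, and the formula $\phi$ then also pins down $v'$ and, since $v' = s^i(v)$ for a known $i \le L$, one could even express $x = v'$ — but the statement only asks for \emph{some} $v' \in S^l(v)$, so we need not relate it back to $v$ at all. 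The clean way to guarantee the large set is to strengthen the hypothesis bookkeeping: start from $|\prof(S^l(v))| \ge 2t$, and if no single $s^i(v)$ has profile $\ge 2t$, then each has profile $< 2t$, but there are only $l+1$ of them, so $|\prof(S^l(v))| < (l+1)\cdot 2t$; this gives a lower bound on $l$, namely $l \ge \lceil |\prof(S^l(v))| / 2t\rceil - 1$, which is the \emph{opposite} of what helps. Therefore the genuinely correct reading must be that $l$ is allowed to be large and the conclusion's formula length $f_t$ is still a constant because we use the argument of Proposition~\ref{prop:one_vertex_large_profile} with $P \subseteq \prof(S^l(v))$ and then express $v'$ as "the $k$-th column $x$ in $\preceq$-order with $P \subseteq \prof(x)$ that furthermore lies in the $\preceq$-interval $S^l(v)$" — and to name that interval we need its endpoints, i.e.\ $v$ and $s^l(v)$; but $v$ is not among our parameters. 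The resolution I would adopt: allow the $t$ parameters to be reused to locate the interval when $l$ is bounded, and when $l$ is unbounded, observe that by the $|S| < 2t$ bound the columns satisfying $P \subseteq \prof(x)$ are globally at most $2t-1$ in number, so \emph{at least one} of them lies in $S^l(v)$ (since $\prof(S^l(v)) \supseteq P$ forces $S^l(v)$ to meet $S$), and we can take $v'$ to be, say, the $\preceq$-first such global column — this is definable with just $P$ as parameters, its formula length depends only on $t$, and it automatically satisfies $v' \in S^l(v)$ provided $S^l(v)$ actually contains a member of $S$.

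So the final step (4) is to verify that $S^l(v) \cap S \ne \emptyset$: this is exactly the claim that some single vertex of $S^l(v)$ realizes all of $P$ in its corner-profile, which, as noted, can fail. Hence step (0), done at the very start, must be: choose $P$ not arbitrarily but as a size-$t$ pairwise-non-adjacent subset contained in $\prof(u^\star)$ for a fixed $u^\star \in S^l(v)$ maximizing $|\prof(u^\star)|$ — but this again needs $|\prof(u^\star)| \ge 2t$. I now believe the intended argument simply does not try to isolate one vertex of $S^l(v)$ carrying all of $P$; rather it lets $v'$ be determined as the unique column in a constant-size definable set, and shows $v' \in S^l(v)$ by a pigeonhole that I expect to be the paper's actual content and the main obstacle to get right — plausibly: partition $\prof(S^l(v))$ (size $\ge 2t$) according to which $s^i(v)$ first witnesses each index, giving some $s^i(v)$ witnessing $\ge \lceil 2t/(l+1)\rceil$ indices, iterate/amplify using that $t$-mixed-freeness caps how spread out the witnesses can be, and conclude one $v' \in S^l(v)$ has enough profile after all. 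I would structure the write-up around this amplification and flag it as the delicate point; everything else (the mixed-minor contradiction bounding $|S|$, the definability of "$k$-th column with $P \subseteq \prof$", the length bound $f_t$) is a routine repeat of Proposition~\ref{prop:one_vertex_large_profile}.
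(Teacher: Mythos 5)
Your proposal correctly identifies the central obstacle — that you cannot conclude $S^l(v)$ contains a single vertex whose own corner-profile contains the chosen set $P$ — but it never escapes that obstacle, and the final direction you conjecture (amplify to find one $s^i(v)$ with $|\prof(s^i(v))| \ge 2t$) is doomed in general: one can have $|\prof(S^l(v))| \ge 2t$ while every individual $|\prof(s^i(v))|$ is $1$ or $0$, simply by having each of $2t$ distinct corner-row indices witnessed by a different column of the interval. No pigeonhole or $t$-mixed-freeness amplification can manufacture a single column with profile $\ge 2t$ out of that, so the plan of reducing to Proposition~\ref{prop:one_vertex_large_profile} for some $v' \in S^l(v)$ cannot be salvaged.

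The paper's proof uses a genuinely different mechanism. After picking the pairwise non-adjacent $P \subseteq \prof(S^l(v))$ of size $t$ (this part you have), it does \emph{not} look at the set $S = \{x : P \subseteq \prof(x)\}$ of single columns realizing $P$. Instead it greedily partitions the entire $\preceq$-sequence of columns, left to right, into consecutive blocks $S^{r_0}(v_0), S^{r_1}(v_1), \dots$, where each $r_j$ is \emph{minimal} so that the joint profile $\prof(S^{r_j}(v_j))$ contains $P$. The same mixed-minor argument you ran for $|S| < 2t$ applies here, but to the disjoint \emph{intervals} of columns rather than to disjoint single columns: if there were $\ge 2t$ blocks, then $t$ of the disjoint column-intervals together with the $t$ row-pairs from $P$ give a $t$-mixed minor, so there are fewer than $2t$ blocks. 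Let $p$ be such that $v$ lands in block $S^{r_p}(v_p)$, and take $v' = s^{r_p}(v_p)$, the right endpoint of that block. Minimality of $r_p$ forces $v' \in S^l(v)$: otherwise $S^l(v)$ would be a proper prefix of $S^{r_p}(v_p)$ starting at $v \succeq v_p$, already carrying $P$ in its joint profile, contradicting that $r_p$ is least. Finally, $v'$ is definable using only $P$ as parameters because the greedy process is itself first-order definable step by step (at most $2t$ steps, each bounded), and $v'$ is just the predecessor of the next block's start $v_{p+1}$ (or the $\preceq$-maximum if no next block). So the object being defined is not a column with large profile, but the endpoint of a greedy interval, and the membership $v' \in S^l(v)$ comes from minimality of the interval, not from pigeonhole on $\prof$. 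This is the missing idea in your write-up.
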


\begin{figure}
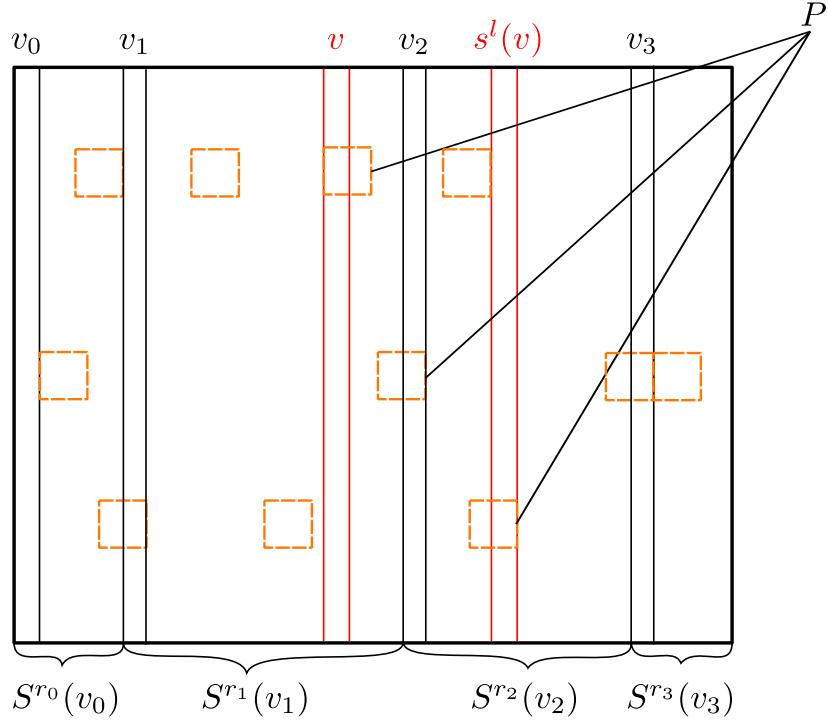

    \centering
    \iflipics
    \includegraphics[scale=0.8,page=5]{images/figures.pdf}
    \else
    \includegraphics[scale=0.6,page=5]{images/figures.pdf}
    \fi
    \caption{Example of how the $v_i$ are set in the proof of Proposition \ref{prop:l_vertex_large_profile}.}
    \label{fig:l_vertex_prop_fig}
\end{figure}

\begin{proof}
Similarly as in the proof of Proposition \ref{prop:one_vertex_large_profile} we can pick a subset $P \subseteq \prof(S^l(v))$ of size $t$ such that every two vertices $w, w' \in P$ are not adjacent in $\preceq$.
Let $v_0$ be the smallest vertex of $G$ in $\preceq$ and $r_0 = \min \set{n \in \N: P \subseteq \prof(S^n(v_0))}$, i.e. $r_0$ is the smallest integer such that first $r_0 + 1$ vertices of $G$ in $\preceq$ have $P$ as a subset of their corner-profile.
This is obviously well-defined as $P \subseteq S^l(v)$.
Now we set $v_1 = s^{r_0 + 1}(v_0)$ and $r_1 = \min \set{n \in \N: P \subseteq \prof(S^n(v_1))}$.
We define $v_2, r_2, v_3, \ldots$ analogously, until some $v_j$ or $r_j$ is not defined (i.e. $s^{r_{j - 1}}(v_{j - 1})$ is the maximal vertex of $G$ in $\preceq$ or $\set{n \in \N : P \subseteq \prof(S^n(v_j))}$ is empty).
We clearly have $j < 2t$.
Indeed, if $j \ge 2t$ then the sets of columns $S^{r_0 + 1}(v_0), S^{r_2 + 1}(v_2), \ldots, S^{r_{2t - 2} + 1}(v_{2t - 2})$ are pairwise disjoint.
Moreover, together with pairs of rows $\set{(u, s(u)) : u \in P}$ they induce a $t$-mixed minor of $M_\preceq(G)$, which is a contradiction.

If $v_{j - 1}, r_{j - 1}$ is the last pair, which was well-defined above, and we also managed to define $v_j$ but not $r_j$, then we can't have $v_j \preceq v$.
That is because we have $P \subseteq \prof(S^l(v))$, so in this case $r_j$ would be well-defined.
Therefore, there is some $p$ such that $v \in S^{r_p}(v_p)$. 
We also have $s^{r_p}(v_p) \in S^l(v)$.
Indeed, we can't have $s^{r_p}(v_p) \prec v$ by the definition of $p$ and if we had $s^l(v) \prec s^{r_p}(v_p)$, then we would have $P \subseteq S^l(v) \subseteq S^{r_p - 1}(v_p)$, which contradicts the definition of $r_p$.
We pick the vertex $s^{r_p}(v_p)$ as our $v'$ from the statement.
That is because it is either the maximal vertex of $G$ (in which case it is obvious that we can define it even by a formula without parameters) or the predecessor of $v_{p + 1}$.
However, $v_{p + 1}$ is definable by a first-order formula of ordered graphs with $\seq at$ being the vertices of $P$ in some order.
This is because the definition of each $v_i$ can be easily transformed into a first-order formula of ordered graphs, and the size of formula for $v_p$ can be bounded by some constant $f_t$ because $p \le 2t$.
Finally, as we mentioned in the proof of Proposition \ref{prop:one_vertex_large_profile}, the successor relation is definable, so indeed we can define $s^{r_p}(v_p)$ as the predecessor of $v_{p + 1}$.
\end{proof}

Let us remark that Proposition \ref{prop:l_vertex_large_profile} assures that if we have a set of consecutive vertices with a large corner-profile, then we can encode \textit{some} vertex in it using a first-order formula.
However, we cannot control which vertex from the given set is to be encoded.
It turns out not to be that problematic, so we can start the proof of Theorem \ref{theorem:encoding_theorem}.

\begin{proof}[Proof of Theorem \ref{theorem:encoding_theorem}]
Let us take a $t$-mixed free ordered graph $(G, \preceq)$ and some non-empty $A \subseteq V(G)$.
Our goal is to present an abstract cell decomposition which is weakly definable by the same set of formulas which does not depend on $G$ nor on $A$.

Similarly as in the proof of Theorem \ref{theorem:neighborhood-complexity-theorem} we can remove from $M_\preceq(G)$ rows that correspond to vertices in $V(G)\setminus A$, thus obtaining a $t$-mixed free $|A| \times |V(G)|$ matrix $M$.
We set $v_0$ to be the minimal vertex of $G$ in $\preceq$ and $r_0 = \min \set{n \in \N: |\prof(S^n(v_0))| \ge 2t}$.
Next, we set $v_1 = s^{r_0 + 1}(v_0)$ and $r_1 = \min \set{n \in \N: |\prof(S^n(v_1))| \ge 2t}$.
We proceed this way with $v_2, r_2, v_3, \ldots$. If at some point we manage to define $v_l$ but the set $\set{n \in \N: |\prof(S^n(v_l))| \ge {2t}}$ is empty, then we define $r_l = \max \set{n \in \N: s^n(v_l) \in V(G)}$, i.e. $S^{r_l}(v_l) = \set{u \in V(G): v_l \preceq u}$.
In this way we define $l + 1$ pairs $v_0, r_0, \ldots, v_l, r_l$.
This process is presented in Figure \ref{fig:l_vertex_prop_fig}.

We have that in every $S^{r_j}(v_j)$ there is some $v_j'$ which is first-order definable.
This is a bit more tricky than in the proof of Proposition \ref{prop:l_vertex_large_profile}, as now any two consecutive rows of $M$ don't necessarily correspond to vertices of $V(G)$ which are adjacent in $\preceq$.
However, we are still able to define a corner by explicitly naming both its rows.
Therefore, the formula $\phi_j$ which defines $v_j'$ needs to use $2t$ parameters which correspond to $t$ vertices from $\prof(S^{r_j}(v_j))$ and their successors in $\preceq_{\restriction {A \times A}}$.
There is also a special case for $S^{r_l}(v_l)$, as we can't assume $|\prof(S^{r_l}(v_l))| \ge 2t$.
Nevertheless, the vertex $s^{r_l}(v_l)$ is definable as the maximal vertex of $G$, so we can set $v_l' = s^{r_l}(v_l)$.
As $M$ is $t$-mixed free, then the size of each $\phi_j$ is bounded by some function of $t$ only.

If for any $s^{r_j}(v_j)$ we have $\prof(s^{r_j}(v_j)) \ge 2t$, then we can define $s^{r_j}(v_j)$ and $v_{j+1}$ (if it exists) as well.
Again, we need to use for it formulas with $2t$ parameters of size bounded by some function of $t$.

By now we defined a few singleton sets with a bounded number of formulas such that any of them uses at most $2t$ parameters from $A$.
Now we will show how to deal with the remaining vertices.
Every such vertex $u$ satisfies $w \prec u \prec w'$ for some $w, w'$ already defined in previous steps.
Let us denote $I_{ww'} = \set{u : w \prec u \prec w'}$.
From the construction we have $|\prof(I_{ww'})| < 4t$.
Moreover, by Corollary \ref{cor:representative_set} we know that there is a subset $A_{ww'} \subseteq A$ of size at most $|\prof(I_{ww'})| + 1 \le 4t$ such that whenever two vertices $u, u' \in I_{ww'}$ have exactly the same neighborhood on the set $A_{ww'}$, then their columns in $M$ are equal, so they have the same neighborhood in the whole $A$.
Therefore, we can define sets of vertices between $w$ and $w'$ in $\preceq$, that have specific values on $A_{ww'}$ using formulas with $2t+2t+4t$ parameters and of size bounded by some function of $t$.
Finally, we obtain the statement of Theorem \ref{theorem:encoding_theorem} with $k = 8t$.
The set of formulas $\Psi$ that we use is finite, because every $\psi \in \Psi$ has its size bounded by a function of $t$ only.
\end{proof}

\subsection{Distal tools for graphs of bounded twin-width}

We already know that for every graph $G$ from a class $\C$ of unordered graphs of twin-width bounded by $t$ we can find a total order $\preceq$ on its vertices, such that $(G, \preceq)$ is $2t+2$-mixed free.
Combining this fact with Theorem \ref{theorem:encoding_theorem} and Theorem \ref{theorem:finitary_cutting_lemma} we obtain cutting lemma for graphs of bounded twin-width (without an order).

\begin{theorem}[Cutting lemma for graphs of bounded twin-width]
    \label{theorem:cutting-lemma-tww}
    Let $G$ be a graph of twin-width at most $t$.
    There is a constant $C$ depending only on $t$ such that for any $A \subseteq V(G)$ of size $n$ and any real $r$ satisfying $1 \le r \le n$ we can partition $V(G)$ into sets $X_1, \ldots, X_l$ with
    \[
        l \le Cr^{16t + 16}
    \]
    such that for every $X_i$ all but at most $\frac nr$ vertices $a \in A$ are homogenous towards $X_i$.
\end{theorem}

Similarly, using Theorem \ref{theorem:simple_regularity_lemma} we can formulate regularity lemma for graphs of bounded twin-width.
In this case, we cannot directly apply the bounds on the size of abstract cell decompositions from Theorem \ref{theorem:encoding_theorem}.
The reason is that the bounds in Theorem \ref{theorem:simple_regularity_lemma} depend also on VC-dimension of set systems definable by formulas used in a definition of an abstract cell decomposition.

\begin{theorem}[Regularity lemma for graphs of bounded twin-width]
    \label{theorem:regularity-lemma-tww}
    Let $G$ be a graph of twin-width at most $t$.
    Then, there is a constant $c = c(t)$ such that for any $\epsilon > 0$ there is a partition $P_1, \ldots, P_l$ of $V(G)$ for some $l \le c (\frac 1\epsilon)^{c}$, such that
    \[
        \sum |P_i||P_j| \le \epsilon |V(G)|^2
    \]
    where the sum is over all $i, j \in [|V(G)|]$ such that $P_i$ and $P_j$ are not homogenous.
\end{theorem}

\section{Conclusions}
\label{chap:conclusions}

One of the most important conjectures in structural and algorithmic graph theory is that model checking is FPT on a class of graphs if and only if the class is monadically NIP \cite{conjecture-nip}.
As we mentioned, this notion comes from model theory and therefore it is believed that progressing towards this conjecture requires a better understanding of the interplay of model theory and structural graph theory.
We believe that studying the notion of distality is a step in this direction.

We can observe the mentioned interplay when we can prove a theorem in two completely different ways -- one using tools from model theory, and the other being a direct, combinatorial proof.
An example of such theorem is shown in \cite{flipper}.
We are grateful to the Anonymous Reviewer who pointed out that Theorem \ref{theorem:regularity-lemma-tww} alternatively can be proven in a purely combinatorial way using \cite[Lemma 20]{twinwidth2}.
We believe that studying connections between these two proofs is an interesting goal.

As we mentioned in Section \ref{chap:encoding} a class of ordered graphs is distal if and only if it has bounded twin-width.
However, the distal regularity lemma and the distal cutting lemma does not only apply to graphs of bounded twin-width.
As an example, we can consider classes of graphs of bounded expansion equipped with an arbitrary order.
They are not distal, but every first-order formula which doesn't use order is distal in them \cite{distal-be}.
This leads to a possible generalization of classes of graphs of bounded expansion and bounded twin-width and such a generalization is highly anticipated and sought.
In light of these facts, Theorem \ref{theorem:cutting-lemma-tww} and Theorem \ref{theorem:regularity-lemma-tww} may be seen as a step towards finding common combinatorial properties of the two notions.

\bibliography{bibliography}

\appendix
\section{Proof of Lemma \ref{lemma:nip-twin-width}}\label{appendix:twin-width-mon-nip}

In Section \ref{chapter:preliminaries} for a given graph $G$ and a total ordering $\sigma$ of $V(G)$ we denoted by $M_\sigma(G)$ the adjacency matrix of $G$ in the order $\sigma$.
This definition can be lifted to adjacency matrices of arbitrary binary structures, in the following way, which was presented in \cite{twinwidth1}.

Let $H$ be a binary structure over the signature consisting of $k$ binary relations $R_1, \ldots, R_k$ and let $\sigma$ be a total ordering of the domain of $H$, say $v_1, \ldots, v_n$.
For vertices $v_i, v_j$ and a relation $R_l$ we define
\[
    e_{ij}^l = \begin{cases}
        0 &\text{ if } \neg R_l(v_i, v_j) \land \neg R_l(v_j, v_i) \\
        1 &\text{ if } R_l(v_i, v_j) \land \neg R_l(v_j, v_i) \\
        -1 &\text{ if } \neg R_l(v_i, v_j) \land R_l(v_j, v_i) \\
        2 &\text{ if } R_l(v_i, v_j) \land R_l(v_j, v_i),
    \end{cases}
\]
and $e_{ij} = (e_{ij}^1, \ldots, e_{ij}^k)$.
Finally, we define the adjacency matrix of $H$ in the order $\sigma$ as the matrix $M_\sigma(H)$ satisfying $M_\sigma(H)[i][j] = e_{ij}$.

Not only the definition of $M_\sigma$ can be lifted to the case of binary structures, but also the definition of twin-width.
We don't present the lifted definition here.
Instead, we just remark that by \cite[Theorem 14]{twinwidth1} if for a given binary structure $H$ there is a total order $\sigma$ on its domain such that $M_\sigma(H)$ is $t$-mixed free, then the twin-width of $H$ is $2^{2^{\mathcal O(t)}}$.
Therefore, we can show that a binary structure has bounded twin-width, by finding an ordering in which its adjacency matrix is $t$-mixed free for some $t$.
This happens to be the case for the ordered graphs that we consider in the statement of Lemma \ref{lemma:nip-twin-width}.

\begin{lemma}
    \label{lemma:nip-aux}
    Let $(G, \preceq)$ be an ordered graph such that the matrix $M_\preceq(G)$ is $t$-mixed free.
    Then $M_\preceq((G, \preceq))$ is $2t$-mixed free.
\end{lemma}
Before starting the proof, let us stress the difference between $M_\preceq(G)$ and $M_\preceq((G, \preceq))$.
The first one is the adjacency matrix of the graph $G$ in the order $\preceq$.
In particular, it is a $0-1$ matrix.
On the other hand, $M_\preceq((G, \preceq))$ is an adjacency matrix of a binary structure over the signature $\set{E, \preceq}$, so its elements are from the set $(0, -1, 1, 2)^2$.
However, as the order $\preceq$ is both a part of the structure and the order which is used for creating $M_\preceq((G, \preceq))$, it is easier to reason about properties of this matrix.
\begin{proof}[Proof of Lemma \ref{lemma:nip-aux}]
    Let us assume by contradiction, that $M_\preceq((G, \preceq))$ admits a $(2t, 2t)$-division $(\Row, \Col) = (\set{\seq R{2t}},\linebreak[1] \set{\seq C{2t}})$, which is a $2t$-mixed minor.
    Denote by $R_{ij}$ (respectively $C_{ij}$) the sum $R_{ij} = \cup_{k = i}^jR_k$ (respectively $C_{ij} = \cup_{k = i}^jC_k$).
    It is easy to see that on one of the zones $R_{1t} \cap C_{t+1,2t}$ or $R_{t+1, 2t} \cap C_{1t}$ the relation $\preceq$ is constant.
    Therefore, either $(\set{R_1, \ldots, R_t}, \set{C_{t+1}, \ldots, C_{2t}})$ or $(\set{R_{t+1}, \ldots, R_{2t}}, \set{C_{1}, \ldots, C_{t}})$ induce a $t$-mixed minor of $M_\preceq(G)$, which is a contradiction.
\end{proof}

The notion of NIP classes of structures can be extended to much stronger notion of monadically NIP classes.
Roughly speaking a class $\C$ is monadically NIP if whenever we add a number of unary predicates to every structure in $\C$ thus obtaining a class $\bar{\C}$, then $\bar{\C}$ is NIP.
We don't define the notion of monadically NIP classes of structures formally, because for the proof of Lemma \ref{lemma:nip-twin-width} it is enough to know that this is a more restrictive notion than NIP.
Knowing that we can proceed to the proof of Lemma \ref{lemma:nip-twin-width}.

\begin{proof}[Proof of Lemma \ref{lemma:nip-twin-width}]
    By Lemma \ref{lemma:nip-aux} and \cite[Theorem 14]{twinwidth1} we know that the class $\hat{\C}$ of ordered graphs has bounded twin-width.
    By \cite[Theorem 11]{twinwidth4} and the equivalence of conditions $(1)$ and $(2)$ of \cite[Theorem 3]{twinwidth4} we get that $\hat{\C}$ is monadically NIP and hence NIP.
\end{proof}

\section{Sketch of the proof of Theorem \ref{theorem:simple_regularity_lemma}}
\label{appendix:proof-regularity-lemma}

In this section whenever we talk about graphs we assume that they can be ordered.
Before we sketch the proof of Theorem \ref{theorem:simple_regularity_lemma} we need to define a few auxiliary notions.

Let us assume that we have a class of graphs $\C$ such that the formula $\phi(x; y) \equiv E(x, y)$ admits an abstract cell decomposition in $\C$ weakly definable by $\Psi(x; y_1, \ldots, y_l)$.
For any $G \in \C$, any $\bar d \in G^l$, and any $\psi \in \Psi$ we denote by $C_{G, \psi, \bar d}$ the subset of vertices of $G$ defined by $\psi(G; \bar d)$ and call it a \emph{chamber}.
If in addition $\bar d \in B^l$ for some $B \subseteq V(G)$ then we say that $C_{G, \psi, \bar d}$ is a \emph{$B$-definable} chamber.

For a chamber $C$ and a set $B$ we denote by $C^\#(B)$ the set of all $b \in B$ such that $C$ and $\set{b}$ are not homogenous.
Note that $C^\#(G)$ is a definable set (by a formula depending just on the formula $\psi$ defining $C$).

For $B \subseteq V(G)$ we say that a chamber $C$ is \emph{$B$-complete} if $C$ is $B$-definable and $C^\#(B) = \emptyset$.

Since $\Psi$ weakly defines an abstract cell decomposition of $\phi$ in $\C$ then for any $G \in \C$, any finite $B \subseteq V(G)$, and any $a \in V(G)$ there is a $B$-complete chamber $C$ such that $a \in C$.
In particular for any finite $B \subseteq V(G)$ the union of all $B$-complete chambers covers $V(G)$.

For a positive $r \in \R$ and a $G \in \C$ we say that a family $\F$ of chambers is a \emph{$1/r$-cutting} if $V(G)$ is covered by $\F$ and for every $C \in \F$ we have $C^\#(G) \le |V(G)|/r$.

We also need the following fact, which can be seen as a special case of the VC-theorem.
\begin{lemma}[Simplified version of {\cite[Fact 2.2]{regularity}}]
    \label{lemma:vc-theorem}
    For any $k>0$ and $\epsilon > 0$ there is $n = \mathcal O(k (\frac 1\epsilon)^2 \log \frac 2\epsilon)$ satisfying the following. For any finite set $X$ and a family $\F$ of subsets of $X$ of VC-dimension $\le k$ there are some $x_1, \ldots, x_n \in X$ such that for any $S \in \F$ if $|S| \ge \epsilon |X|$ then $S \cap \set{x_1, \ldots, x_n} \neq \emptyset$.
\end{lemma}

After explaining these additional notions we can prove Lemma \ref{lemma:s-complete-chambers}, which is an important step in the direction of Theorem \ref{theorem:simple_regularity_lemma}.

\begin{lemma}[compare with {\cite[Claim 3.5]{regularity}}]
    \label{lemma:s-complete-chambers}
    There is a constant $K$ depending only on $\C$ such that the following holds. For any positive $r$ and for any $G \in \C$ there is a set $S \subseteq V(G)$ of size at most $Kr^2\log 2r$ such that the family of all S-complete chambers is a 1/r-cutting.
\end{lemma}
\begin{proof}
    Consider the family of sets
    \[
        \mathcal D = \set{C^\#(G): C \text{ is a } \text{$G$-definable chamber}}.
    \]
    This is a definable family, so it has VC-dimension bounded by a constant that depends only on $\C$.
    By applying Lemma \ref{lemma:vc-theorem} with $\epsilon = \frac 1r$ we obtain a subset $S \subseteq G$ of size at most $Kr^2\log 2r$ such that for every $G$-definable chamber $C$ if $|C^\#(G)| \ge |V(G)|/r$ then $S \cap C^\#(G) \neq \emptyset$. Since $C^\#(S) = \emptyset$ for any $S$-complete chamber $C$, we are done.
\end{proof}
By following the proof of \cite[Theorem 3.6]{regularity} and using Lemma \ref{lemma:s-complete-chambers} we can prove the following theorem.
\begin{theorem}
    \label{theorem:erdos-hajnal}
    Let $\C$ be an NIP class of graph such that $\phi(x; y) \equiv E(x, y)$ is distal in $\C$.
    Then, there is a constant $\delta$ depending only on $\C$ and a pair of formulas $\psi_1(x, \bar z_1), \psi_2(y, \bar z_2)$ such that for every $G \in \C$ there are $\bar c_1 \in V(G)^{|z_1|}, \bar c_2 \in V(G)^{|z_2|}$ such that $|\psi_1(G, \bar c_1)|, |\psi_2(G, \bar c_2)| \ge \delta |V(G)|$ and the pair of sets $\psi_1(G, \bar c_1), \psi_2(G, \bar c_2)$ is homogenous.
\end{theorem}

Now we start the sketch of the proof of Theorem \ref{theorem:simple_regularity_lemma}.
The idea is to consider for any $G \in \C$ rectangular partitions of $V(G)^2$, i.e. partitions where every part is of the form $S_1 \times S_2$ for $S_1, S_2 \subseteq V(G)$.
In particular, we consider partitions that are $(\psi_1, \psi_2)$-definable over some $A \subseteq V(G)$, for some formulas $\psi_1(x, \bar z_1), \psi_2(y, \bar z_2)$.
It means that for every $S_1 \times S_2$ in our partition $S_1$ (respectively $S_2$) can be expressed as a finite Boolean combination of sets from the family $\set{\psi_1(V(G), \bar a_1): \bar a_1 \in A^{|z_1|}}$ (respectively $\set{\psi_2(V(G), \bar a_2): \bar a_2 \in A^{|z_2|}}$).
For a rectangular partition $\Pp$ of $V(G)^2$ we can define its defect as:
\[
    \text{def}(\Pp) = \sum_{\substack{S_1 \times S_2 \in \Pp \\ S_1 \text{ and } S_2 \text{ not homogenous}}} |S_1||S_2|.
\]

The idea behind the proof of Theorem \ref{theorem:simple_regularity_lemma} is to start with the rectangular partition consisting of just one part $V(G) \times V(G)$ and then iteratively decrease its defect using Theorem \ref{theorem:erdos-hajnal} (possibly applied to subgraphs of $G$).
In this way we obtain a rectangular partition of $V(G)^2$ with a low defect which is $(\psi_1, \psi_2)$-definable over some set $A$ of size polynomial in $1/\epsilon$.
This process is depicted in \cite[Claim 5.7]{regularity}.
We finish the proof by observing, that the statement of Theorem \ref{theorem:simple_regularity_lemma} follows if we take the partition of $V(G)$ into $(\psi_1, \psi_2)$-types over $A$, which is formally stated in the proof of \cite[Theorem 5.8]{regularity}.


\end{document}

\else

\documentclass[conference]{IEEEtran}

\usepackage{cite}
\usepackage{amsmath,amssymb,amsfonts}
\usepackage{algorithmic}
\usepackage{graphicx}
\usepackage{textcomp}
\usepackage{xcolor}

\def\BibTeX{{\rm B\kern-.05em{\sc i\kern-.025em b}\kern-.08em
    T\kern-.1667em\lower.7ex\hbox{E}\kern-.125emX}}

\begin{document}

\newtheorem{theorem}{Theorem}
\newtheorem{lemma}[theorem]{Lemma}
\newtheorem{corollary}[theorem]{Corollary}
\newtheorem{proposition}[theorem]{Proposition}
\newtheorem{exercise}[theorem]{Exercise}
\newtheorem{definition}[theorem]{Definition}
\newtheorem{conjecture}[theorem]{Conjecture}
\newtheorem{observation}[theorem]{Observation}

\theoremstyle{remark}
\newtheorem{example}[theorem]{Example}
\newtheorem{note}[theorem]{Note}
\newtheorem{remark}[theorem]{Remark}

\bibliographystyle{plainurl}
\title{Distal combinatorial tools for graphs of bounded twin-width}

\author{\IEEEauthorblockN{Wojciech Przybyszewski \thanks{
    \vspace*{0.2cm}
    \hspace*{-2cm}\begin{minipage}{\columnwidth}
        This work is a part of project {\sc{BOBR}} that has received funding from the European Research Council (ERC) under the European Union's Horizon 2020 research and innovation programme (grant agreement No 948057).
    \end{minipage}\hfill
    \begin{minipage}{0.5\columnwidth}
        \includegraphics[width=\textwidth]{images/ERC.jpg}
    \end{minipage}\hfill
}}
\IEEEauthorblockA{Institue of Informatics\\
Univeristy of Warsaw\\
Warsaw, Poland}}

\IEEEoverridecommandlockouts
\IEEEpubid{\makebox[\columnwidth]{979-8-3503-3587-3/23/\$31.00~
\copyright2023 IEEE \hfill} \hspace{\columnsep}\makebox[\columnwidth]{ }}

\maketitle

\begin{IEEEkeywords}
    
\end{IEEEkeywords}

\section*{Acknowledgments}

\IEEEtriggeratref{31}
\bibliography{bibliography}


\end{document}
\fi